\theoremstyle{thmstyleone}%
\newtheorem{theorem}{Theorem}[section]%
\newtheorem{proposition}[theorem]{Proposition}%
\newtheorem{lemma}[theorem]{Lemma}%
\newtheorem{corollary}[theorem]{Corollary}%
\theoremstyle{thmstyletwo}%
\newtheorem{example}[theorem]{Example}%
\theoremstyle{thmstylethree}%
\newtheorem{definition}[theorem]{Definition}
\newcommand{\ie}{\textit{i.e.}, }
\newcommand{\eg}{\textit{e.g.}, }
\newcommand*{\wrt}{{w.r.t.}}
\newcommand{\cA}{\mathcal{A}}
\newcommand{\cR}{\mathcal{R}}
\newcommand{\ra}{\mathop{\rightarrow}}
\newcommand{\rra}{\mathop{\Rightarrow}}
\newcommand{\lpa}{\mathop{\leadsto}}
\newcommand{\hra}{\mathop{\hookrightarrow}}
\newcommand{\nat}{\mathbb{N}}
\newcommand{\termset}{T(\Sigma, X)}
\newcommand{\termsett}{T(\Sigma \cup \{\square\}, X)}
\newcommand{\subsset}{S(\Sigma, X)}
\newcommand{\length}[1]{\vert {#1} \vert}
\newcommand{\vars}{\operatorname{\mathit{Var}}}
\newcommand{\dom}{\operatorname{\mathit{Dom}}}
\newcommand{\mgu}{\operatorname{\mathit{mgu}}}
\newcommand{\pos}{\operatorname{\mathit{Pos}}}
\newcommand{\subterm}[2]{{#1} \vert_{#2}}
\newcommand{\symb}[1]{\mathsf{#1}}
\newcommand{\asym}{\operatorname{\mathsf{a}}}
\newcommand{\bsym}{\operatorname{\mathsf{b}}}
\newcommand{\csym}{\operatorname{\mathsf{c}}}
\newcommand{\dsym}{\operatorname{\mathsf{d}}}
\newcommand{\fsym}{\operatorname{\mathsf{f}}}
\newcommand{\Fsym}{\operatorname{\mathsf{F}}}
\newcommand{\gsym}{\operatorname{\mathsf{g}}}
\newcommand{\Gsym}{\operatorname{\mathsf{G}}}
\newcommand{\hsym}{\operatorname{\mathsf{h}}}
\newcommand{\Hsym}{\operatorname{\mathsf{H}}}
\newcommand{\ssym}{\operatorname{\mathsf{s}}}
\newcommand{\zero}{\mathsf{0}}
\newcommand{\one}{\mathsf{1}}
\newcommand{\psym}{\operatorname{\mathsf{p}}}
\newcommand{\qsym}{\operatorname{\mathsf{q}}}
\newcommand{\ars}{(A,\rra_{\Pi})}
\newcommand{\oc}{\operatorname{\mathit{OC}}}
\newcommand{\unf}{\operatorname{\mathit{Unf}}}
\newcommand{\binunf}{\operatorname{\mathit{Binunf}}}
\newcommand{\phiinst}{\operatorname{\mathit{ins}}}
\newcommand{\phimg}{\operatorname{\mathit{mg}}}
\newcommand{\sequence}[1]{\langle{#1}\rangle}
\newcommand{\seqset}[1]{{\overline{{#1}}}}
\newcommand{\aprove}{\textsf{AProVE}}
\newcommand{\autonon}{\textsf{AutoNon}}
\newcommand{\mnm}{\textsf{MnM}}
\newcommand{\muterm}{\textsf{MU-TERM}}
\newcommand{\natt}{\textsf{NaTT}}
\newcommand{\nti}{\textsf{NTI}}
\newcommand{\tttt}{%
 \textsf{T\kern-0.15em\raisebox{-0.55ex}T\kern-0.15emT\kern-0.15em\raisebox{-0.55ex}2}%
}
\newcommand{\orcid}[1]{\href{https://orcid.org/#1}{\includesvg[height = 2ex]{svg-inkscape/ORCID_iD}}}
\begin{document}

\title{Non-Termination in Term Rewriting and Logic Programming}

\author{\fnm{\'Etienne} \sur{Payet} \orcid{0000-0002-3519-025X}}
\email{etienne.payet@univ-reunion.fr}
\affil{
  \orgdiv{LIM},
  \orgname{Universit\'e de la R\'eunion},
  \orgaddress{\country{France}}}

\abstract{In this paper, we define two particular forms of
non-termination, namely \emph{loops} and \emph{binary chains},
in an abstract framework that encompasses term rewriting and
logic programming. The definition of loops relies on the
notion of \emph{compatibility} of binary relations. We also
present a syntactic criterion for the detection of a special
case of binary chains. Moreover, we describe our implementation
\nti{} and compare its results at the Termination Competition
2023 with those of leading analyzers.
}

\keywords{Abstract reduction systems, Term rewriting,
Logic programming, Non-termination, Loop}

\maketitle

\section{Introduction}\label{sect:intro}
This paper is concerned with the abstract treatment of
non-termination in structures where one rewrites elements
using indexed binary relations. Such structures can be
formalised by \emph{abstract reduction systems}
(ARSs)~\cite{baaderN98,terese03}, \ie pairs $\ars$ where
$A$ is a set and $\rra_{\Pi}$ (the rewrite relation) is
a union of binary relations on $A$, indexed by a set $\Pi$,
\ie $\rra_{\Pi} = \bigcup \{\rra_{\pi} \mid \pi \in \Pi\}$.
Non-termination in this context corresponds to the
existence of an infinite rewrite sequence 
$a_0 \rra_{\pi_0} a_1 \rra_{\pi_1} \cdots$.
In this introduction, we provide an extended, intuitive,
description of the context and content of the paper, with
several examples. 
In Section~\ref{sect:intro-trs-lp},
we introduce two concrete instances of ARSs. Then, in
Sections~\ref{sect:intro-loops}
and~\ref{sect:intro-binary-chains},
we consider two special forms of non-termination. 
Finally, in Section~\ref{sect:intro-simplified-setting},
we describe our implementation.

\subsection{Term Rewriting and Logic Programming}
\label{sect:intro-trs-lp}
\emph{Term rewriting} and \emph{logic programming} are
concrete instances of ARSs, where $\Pi$ indicates what
\emph{rule} is applied at what \emph{position}.
A crucial difference is that the rewrite relation of term
rewriting (denoted by $\ra_{\Pi}$) relies on instantiation
while that of logic programming (denoted by $\lpa_{\Pi}$)
relies on narrowing, \ie on unification.

\begin{example}\label{ex:trs-intro}
    In term rewriting, rules are pairs of finite terms.
    In this context, a \emph{term rewrite system (TRS)}
    is a set of rules over a given signature (\ie a set
    of function symbols).
    Consider for instance the TRS that consists of the rules:
    \begin{align*}
        r_1 &= \big(\fsym(x), \gsym(\hsym(x,\one),x)\big)  \\
        r_2 &= \big(\one, \zero\big) \\
        r_3 &= \big(\hsym(x,\zero), \fsym^2(x)\big) = (u_3, v_3)
    \end{align*}
    Here, $\gsym,\hsym$ (resp. $\fsym$) are function symbols of
    arity 2 (resp. 1), $\zero,\one$ are constant symbols (\ie
    function symbols of arity 0), $x$ is a variable and $\fsym^2$
    denotes 2 successive applications of $\fsym$.
    Consider the term $s = \gsym\big(\hsym(\fsym(x),\zero),x\big)$.
    Its subterm $s' = \hsym(\fsym(x),\zero)$ results from applying
    the substitution $\theta = \{x \mapsto \fsym(x)\}$ to $u_3$,
    denoted as $s' = u_3\theta$, hence it is an instance of $u_3$.
    Therefore, we have $s \ra_{(r_3,p)} \gsym(v_3\theta,x)$, where
    $p$ is the position of $s'$ in $s$ and $\gsym(v_3\theta,x)$
    results from replacing $s'$ by $v_3\theta$ in $s$, \ie
    $s \ra_{(r_3,p)} \gsym(\fsym^3(x),x)$.
\end{example}

\begin{example}\label{ex:lp-intro}
    In logic programming, one rewrites goals
    (\ie finite sequences of terms) into goals and a rule
    is a pair $(u,\seqset{v})$ where $u$ is a term and
    $\seqset{v}$ is a goal. Moreover, the rewriting always
    takes place at the root position of an element of a
    goal. In this context, a \emph{logic program (LP)}
    is a set of rules over a given signature.
    Consider for instance the LP that consists of the rule:
    \[r = \big(\underbrace{\psym(\fsym(x,\zero))}_u,
    \underbrace{\sequence{\psym(x),\qsym(x)}}_{\seqset{v}}\big)\]
    Here, $\psym,\qsym$ (resp. $\fsym$) are function symbols 
    of arity~1 (resp. 2), $\zero$ is a constant symbol and
    $x$ is a variable. Consider the goal
    $\seqset{s} = \sequence{\psym(\zero),\psym(\fsym(x,x)),\psym(x)}$.
    The rule $(u_1,\sequence{v_1,v'_1}) =
    \big(\psym(\fsym(x_1,\zero)), \sequence{\psym(x_1),\qsym(x_1)}\big)$
    is obtained by renaming the variables of $r$ with variables not
    occurring in $\seqset{s}$. The term $s_1 = \psym(\fsym(x,x))$
    is an element of $\seqset{s}$ and the substitution
    $\theta = \{x\mapsto\zero, x_1\mapsto\zero\}$ is a unifier
    of $s_1$ and $u_1$, \ie $s_1\theta = u_1\theta$ (to be precise,
    $\theta$ is \emph{the most general} unifier of $s_1$ and $u_1$,
    see Section~\ref{sect:subs}).
    Therefore, we have
    $\seqset{s} \lpa_{(r,p)}
    \sequence{\psym(\zero),v_1,v'_1,\psym(x)}\theta$,
    where $p$ is the position of $s_1$ in $\seqset{s}$ and
    $\sequence{\psym(\zero),v_1,v'_1,\psym(x)}$ is the goal
    obtained from $\seqset{s}$ by replacing $s_1$ by the
    elements of $\sequence{v_1,v'_1}$, \ie
    $\seqset{s} \lpa_{(r,p)}
    \sequence{\psym(\zero),\psym(\zero),\qsym(\zero),\psym(\zero)}$.
    The operation which is used to rewrite $\seqset{s}$ into
    $\sequence{\psym(\zero),v_1,v'_1,\psym(x)}\theta$ is called
    \emph{narrowing}.
    In this example, it is applied at a given position
    of a goal using an LP rule, but it can be applied
    similarly at a position $p$ of a term $s$ using a TRS rule
    (\ie renaming of the rule with new variables,
    unification of the subterm of $s$ at position $p$ with
    the left-hand side of the renamed rule, replacement of the
    subterm by the right-hand side of the renamed rule and
    application of the most general unifier to the whole
    resulting term).
\end{example}

Another difference between term rewriting and logic programming
is the effect of a rule application outside the position of
the application. For instance, in Example~\ref{ex:trs-intro}, 
the application of $r_3$ at position $p$ of $s$ only affects
the subterm of $s$ at position $p$. In contrast, in
Example~\ref{ex:lp-intro}, the application of $r$ at position
$p$ of $\seqset{s}$ also affects the terms at the other positions,
because first $s_1$ is replaced by $v_1$ and $v'_1$ and then
the unifier $\theta$ is applied to the whole resulting goal.

Due to these differences, termination in logic programming
does not seem related to termination in term rewriting, but
rather to termination in the construction of the
\emph{overlap closure}~\cite{guttagKM83}, because both use
narrowing.
The overlap closure of a TRS $\cR$, denoted as $\oc(\cR)$, is
the (possibly infinite) set defined inductively as follows.
It exactly consists of the rules of $\cR$, together with the
rules resulting from narrowing elements of $\oc(\cR)$ with
elements of $\oc(\cR)$.
More precisely, suppose that $(u_1,v_1)$ and $(u_2,v_2)$
are in $\oc(\cR)$. Then,
\begin{itemize}
    \item (Forward closure) $(u_1\theta,v'_1) \in \oc(\cR)$
    where $\big(v_1 \lpa_{((u_2,v_2),p)} v'_1\big)$, $\theta$
    is the unifier used and $p$ is the position of a
    non-variable subterm of $v_1$;
    \item (Backward closure) $(u'_2,v_2\theta) \in \oc(\cR)$
    where $\big(u_2 \lpa_{((v_1,u_1),p)} u'_2\big)$, $\theta$
    is the unifier used and $p$ is the position of a
    non-variable subterm of $u_2$.
\end{itemize}
Consider for instance the TRS $\cR$ that consists of the
rule $r = \big(\fsym(\ssym(x),y), \fsym(x,\ssym(y))\big)$.
It admits no infinite rewrite sequence, but the construction
of its overlap closure will not stop as it produces each
$\big(\fsym(\ssym^k(x),y), \fsym(x,\ssym^k(y))\big)$, $k > 0$
(where $\ssym^k$ denotes $k$ successive applications of the
unary function symbol $\ssym$). Now, consider the LP that
consists of $r$ (with delimiters $\sequence{\cdot}$
around the right-hand side). It admits the
infinite rewrite
\[\sequence{\,\underbrace{\fsym(x,y)}_{a_0}\,} \lpa^{\theta_1}_{\pi}
\sequence{\,\underbrace{\fsym(x_1,\ssym(y))}_{a_1}\,} \lpa^{\theta_2}_{\pi}
\sequence{\,\underbrace{\fsym(x_2,\ssym^2(y))}_{a_2}\,} \lpa^{\theta_3}_{\pi}
\sequence{\,\underbrace{\fsym(x_3,\ssym^3(y))}_{a_3}\,} \lpa^{\theta_4}_{\pi} \cdots\]
where $\pi = (r,\sequence{1})$ ($\sequence{1}$ is the position of the
leftmost term of a goal) and each step is decorated
with the corresponding unifier:
\begin{align*}
    \theta_1 &= \{x \mapsto \ssym(x_1), y_1 \mapsto y\} \\
    \theta_2 &= \{x_1 \mapsto \ssym(x_2), y_2 \mapsto \ssym(y)\} \\
    \theta_3 &= \{x_2 \mapsto \ssym(x_3), y_3 \mapsto \ssym^2(y)\} \\
    & \vdots
\end{align*}
This rewrite sequence corresponds to the infinite construction
of the overlap closure of $\cR$. Indeed, for all $k > 0$, the pair
$(a_0\theta_1\dots\theta_k, a_k\theta_1\dots\theta_k)$ corresponds to
the rule $\big(\fsym(\ssym^k(x),y), \fsym(x,\ssym^k(y))\big)$:
\begin{align*}
    (a_0\theta_1, a_1\theta_1) &= (\fsym(\ssym(x_1),y), \fsym(x_1,\ssym(y))) \\
    (a_0\theta_1\theta_2, a_2\theta_1\theta_2) &=
    (\fsym(\ssym^2(x_2),y), \fsym(x_2,\ssym^2(y))) \\
    & \vdots
\end{align*}

\subsection{Loops}\label{sect:intro-loops}
The vast majority of the papers dealing with non-termination in
term rewriting provide conditions for the existence of \emph{loops}.
While sufficient conditions are generally used to design loop-detection
approaches~\cite{waldmann04,gieslTS05,zantema05,zanklM07,payet08},
necessary conditions are rather used to prove the absence of
loops~\cite{zantemaG96,geserZ99}.
In this context, a loop refers to a finite rewrite sequence
$a_0 \ra_{\Pi} \cdots \ra_{\Pi} a_n$ where $a_n$ embeds an instance
of $a_0$. It is well known that then, one can go on from $a_n$, \ie
there is a finite rewrite sequence $a_n \ra_{\Pi} \cdots \ra_{\Pi} a_{n+m}$
where $a_{n+m}$ embeds an instance of $a_n$, hence an instance of $a_0$,
and so on. Therefore, there is an infinite rewrite sequence starting
from $a_0$ and in which $a_0$ is ``reached'' periodically
(see Example~\ref{ex:trs-intro-2} below).

The situation in logic programming is different, in the sense that
there is no ``official'' definition of loops (at least, as far as
we know). In contrast to term rewriting, where loopingness is a
special form of non-termination, logic programming loops simply
seem to correspond to infinite rewrite sequences, with no more
precision. In this context, \cite{schreyeVB90,payetM06} provide
sufficient conditions for the existence of loops,
while~\cite{schreyeVB90,sahlin90,bolAK91,shen97} provide necessary
conditions. The former are generally used to design static
(\ie compile-time) loop-detection approaches, while the latter
are used to design dynamic (\ie runtime) loop checks (\ie rewrite
sequences are pruned at runtime when it seems appropriate, with a
risk of pruning a finite rewrite).

In this paper, we provide a generic definition of loops in any
ARS $\ars$. It generalises TRS loops by considering any
\emph{compatible} relation $\phi$, not necessarily the
``embeds an instance of'' relation (see Definition~\ref{def:loop}).
Compatibility means that if $a \rra_{\Pi} a_1$ and $a'$ is related
to $a$ via $\phi$ (denoted as $a' \in \phi(a)$) then we have
$a' \rra_{\Pi} a'_1$ for some $a'_1 \in \phi(a_1)$
(see Definition~\ref{def:compatibility}).

\begin{example}\label{ex:trs-intro-2}
    In Example~\ref{ex:trs-intro}, we have the rewrite sequence
    \begin{equation}\label{eq:ex-trs-intro}
        \underbrace{\fsym(x)}_{a_0} \ra_{(r_1,p_0)}
        \underbrace{\gsym(\hsym(x,\one),x)}_{a_1} \ra_{(r_2,p_1)}
        \underbrace{\gsym(\hsym(x,\zero),x)}_{a_2} \ra_{(r_3,p_2)}
        \underbrace{\gsym(\fsym^2(x),x)}_{a_3}
    \end{equation}
    where $p_0=\epsilon$ (\ie the \emph{root position}, which 
    is the position of a term inside itself), $p_1$ is the position
    of $\one$ in $a_1$ and $p_2$ is the position of $\hsym(x,\zero)$
    in $a_2$.
    Consider $\phi$ that relates any term $s$ to the terms that 
    embed an instance of $s$, \ie of the form $c[s\theta]$, where
    $c$ is a context (a term with a hole in it), $\theta$ a
    substitution and $c[s\theta]$ the term resulting from
    replacing the hole in $c$ by $s\theta$.
    It is well known that $\phi$ is compatible with the rewrite
    relation of term rewriting, \ie $s' \in \phi(s)$
    and $s \ra_{(r,p)} t$ imply that $s' \ra_{(r,p')} t'$
    for some position $p'$ and some $t' \in \phi(t)$.
    We have $a_3 \in \phi(a_0)$ because $a_3 = c[a_0\theta]$
    for $\theta = \{x \mapsto \fsym(x)\}$ and $c = \gsym(\square,x)$,
    where $\square$ is the hole. So, \eqref{eq:ex-trs-intro} is a
    loop and, by the compatibility property, there is an infinite
    rewrite sequence
    \[a_0 \ra_{(r_1,p_0)} a_1 \ra_{(r_2,p_1)} a_2 \ra_{(r_3,p_2)}
    \underbrace{a_3}_{\in \phi(a_0)} 
    \ra_{(r_1,p_3)} \underbrace{a_4}_{\in \phi(a_1)}
    \ra_{(r_2,p_4)} \underbrace{a_5}_{\in \phi(a_2)}
    \ra_{(r_3,p_5)} \underbrace{a_6}_{\in \phi(a_3)}
    \ra_{(r_1,p_6)} \cdots\]
    which is a succession of loops.
\end{example}

\begin{example}\label{ex:lp-intro-2}
    In Example~\ref{ex:lp-intro}, we have the rewrite sequence
    \begin{equation}\label{eq:ex-lp-intro}
        \underbrace{\sequence{\psym(\fsym(x,\zero))}}_{\seqset{a_0}}
        \lpa_{(r,\sequence{1})}
        \underbrace{\sequence{\psym(x),\qsym(x)}}_{\seqset{a_1}}
    \end{equation}
    Consider $\phi$ that relates any goal $\seqset{s}$ to
    the goals that embed a more general goal than
    $\seqset{s}$, \ie of the form $\seqset{c}[\seqset{t}]$,
    where $\seqset{c}$ is a context (a goal with a hole in it)
    and $\seqset{t}$ is more general than $\seqset{s}$, \ie
    $\seqset{s} = \seqset{t}\theta$ for some substitution
    $\theta$ (hence, $\seqset{s}$ is an instance of
    $\seqset{t}$). We prove in this paper
    (Lemma~\ref{lem:ins-mg-compatible}) that $\phi$ is
    compatible with the rewrite relation of logic programming,
    \ie $\seqset{s'} \in \phi(\seqset{s})$ and
    $\seqset{s} \lpa_{(r,p)} \seqset{t}$
    imply that  $\seqset{s'} \lpa_{(r,p')} \seqset{t'}$
    for some position $p'$ and some
    $\seqset{t'} \in \phi(\seqset{t})$.
    We have $\seqset{a_1} \in \phi(\seqset{a_0})$
    because $\seqset{a_1} = \seqset{c}[\seqset{b_0}]$ where
    $\seqset{c} = \sequence{\square,\qsym(x)}$
    and $\seqset{b_0} = \sequence{\psym(x)}$ is more general
    than $\seqset{a_0}$ (we have
    $\seqset{a_0} = \seqset{b_0}\{x \mapsto \fsym(x,\zero)\}$).
    So, \eqref{eq:ex-lp-intro} is a loop and, by the
    compatibility property, there is an infinite rewrite
    sequence
    \[\seqset{a_0} \lpa_{(r,\sequence{1})}
    \underbrace{\seqset{a_1}}_{\in \phi(\seqset{a_0})}
    \lpa_{(r,p_1)} \underbrace{\seqset{a_2}}_{\in \phi(\seqset{a_1})}
    \lpa_{(r,p_2)} \underbrace{\seqset{a_3}}_{\in \phi(\seqset{a_2})}
    \lpa_{(r,p_3)} \cdots\]
    which is a succession of loops.
\end{example}

\subsection{Binary Chains}
\label{sect:intro-binary-chains}
The infinite rewrite sequence of Example~\ref{ex:trs-intro-2}
(resp. Example~\ref{ex:lp-intro-2}) is a succession of loops
that all involve the same sequence $\sequence{r_1,r_2,r_3}$
(resp. $\sequence{r}$) of rules. In this paper, we also consider
infinite rewrite sequences that involve two sequences of rules.
We call them \emph{binary chains}
(see Definition~\ref{def:binary-chain}).

\begin{example}\label{ex:trs-binary-intro}
    Consider the contexts $c = \gsym(\square,\zero,\square)$
    and $c' = \gsym(\square,\one,\square)$ and the TRS that
    consists of the following rules:
    \[\begin{array}{l@{\qquad}l}
        r_1 = \big(\fsym(x, c[y],x), \hsym(x, y)\big) &
        r_3 = \big(\fsym(x, \zero, x), \fsym(c[x], c'[x], c[x])\big) \\[1ex]
        r_2 = \big(\hsym(x, y), \fsym(c[x], y, c[x])\big) &
        r_4 = \big(\one, \zero\big)
    \end{array}\]
    We have the infinite rewrite sequence
    \begin{align*}
        \underbrace{\fsym\big(c[\zero],c[\zero],c[\zero]\big)}_{a_0}
        & \ra_{(r_1,\epsilon)}
        \hsym\big(c[\zero],\zero\big)
        \ra_{(r_2,\epsilon)}
        \fsym\big(c^2[\zero],\zero,c^2[\zero]\big) \\
        & \ra_{(r_3,\epsilon)} 
        \underbrace{\fsym\big(c^3[\zero], c'[c^2[\zero]], c^3[\zero]\big)}_{a'_0}
        \ra_{(r_4,p_0)}
        \underbrace{\fsym\big(c^3[\zero], c^3[\zero], c^3[\zero]\big)}_{a_1} \\
        & \ra_{(r_1,\epsilon)}
        \hsym\big(c^3[\zero],c^2[\zero]\big)
        \ra_{(r_2,\epsilon)}
        \fsym\big(c^4[\zero],c^2[\zero],c^4[\zero]\big) \\
        & \ra_{(r_1,\epsilon)}
        \hsym\big(c^4[\zero],c[\zero]\big)
        \ra_{(r_2,\epsilon)}
        \fsym\big(c^5[\zero],c[\zero],c^5[\zero]\big) \\
        & \ra_{(r_1,\epsilon)}
        \hsym\big(c^5[\zero],\zero\big)
        \ra_{(r_2,\epsilon)}
        \fsym\big(c^6[\zero],\zero,c^6[\zero]\big) \\
        & \ra_{(r_3,\epsilon)}
        \underbrace{\fsym\big(c^7[\zero], c'[c^6[\zero]], c^7[\zero]\big)}_{a'_1}
        \ra_{(r_4,p_1)}
        \underbrace{\fsym\big(c^7[\zero], c^7[\zero], c^7[\zero]\big)}_{a_2} \\
        & \ra_{(r_1,\epsilon)} \cdots
    \end{align*}
    where $c^n$ denotes $n$ embeddings of $c$ into itself and $p_n$
    is the position of $\one$ (occurring in $c'$) in $a'_n$, for all
    $n$ in the set $\nat$ of non-negative integers. We note that for all
    $n \in \nat$, from $a_n$ to $a_{n+1}$ the sequence
    $w_1 = \sequence{r_1,r_2}$ is applied several times and then the
    sequence $w_2 = \sequence{r_3,r_4}$ is applied exactly once.
    This is written as
    $a_n \mathop{\left(\ra^*_{w_1} \circ \ra_{w_2}\right)} a_{n+1}$,
    and hence the infinite rewrite above relies on the two sequences
    of rules $w_1$ and $w_2$.
    In Section~\ref{sect:rec-pairs}, we present a syntactic criterion 
    for detecting a special case of this form of non-termination.
\end{example}

\subsection{Implementation}
\label{sect:intro-simplified-setting}
Our tool~\nti{} (Non-Termination Inference)~\cite{nti}
performs automatic proofs of non-termination in
term rewriting and logic programming. It specifically
considers logic programming with Prolog's leftmost selection
rule, \ie always the leftmost term is chosen in a goal for
narrowing (contrary to Example~\ref{ex:lp-intro}, where
$s_1$ is not the leftmost term of $\seqset{s}$).
It applies program transformation techniques as the very
first step, namely, the
\emph{dependency pair unfolding}~\cite{payet18} in
term rewriting and the
\emph{binary unfolding}~\cite{gabbrielliG94,codishT99}
in logic programming. The idea is to compute ``compressed''
forms of finite rewrite sequences that can be used in a
simplified setting not polluted by technicalities, such as
deeper and deeper embedding of terms
(\eg in Example~\ref{ex:trs-intro-2},
$a_3$ embeds an instance of $a_0$, 
$a_6$ embeds an instance of $a_3$, \dots)
This allows us to detect loops using a simplified version
of the relations $\phi$ of Examples~\ref{ex:trs-intro-2}
and~\ref{ex:lp-intro-2}, and also to concentrate on rewrite
sequences consisting of only one step (as they are
``compressions'' of longer sequences),
see Examples~\ref{ex:trs-intro-simplified},
\ref{ex:trs-binary-intro-simplified}
and~\ref{ex:lp-intro-simplified} below.

\subsubsection{Dependency Pair Unfolding}
The defined symbols of a TRS $\cR$ are the function
symbols $\fsym$ for which there is a rule of the form
$(\fsym(\dots),\dots)$ in $\cR$. Each defined symbol $\fsym$
is associated with a symbol $\Fsym$ of the same arity that
does not occur in $\cR$. The set of dependency pairs of $\cR$ is
\[\left\{\big(\Fsym(s_1,\dots,s_n),\Gsym(t_1,\dots,t_m)\big)
\;\middle\vert\; 
\begin{array}{l}
    \big(\fsym(s_1,\dots,s_n),t\big) \in \cR \\
    \gsym(t_1,\dots,t_m) \text{ is a subterm of } t \\
    \gsym \text{ is a defined symbol of $\cR$}
\end{array}
\right\}\]

The technique of~\cite{payet18} transforms a TRS $\cR$ into
a (possibly infinite) set $\unf(\cR)$ inductively defined as
follows. It exactly consists of the dependency pairs of $\cR$,
together with the rules resulting from narrowing (forwards and
backwards, as for $\oc(\cR)$) elements of $\unf(\cR)$ with
rules and dependency pairs of $\cR$; while dependency pairs
are only applied at the root position, between the application
of two dependency pairs there can be an arbitrary number
of narrowing steps below the root, using rules of $\cR$.
The computation of $\unf(\cR)$ is hence very similar to
that of $\oc(\cR)$, except that: it also considers the
dependency pairs, it only uses the initial dependency pairs
and rules for narrowing elements of $\unf(\cR)$ and it
allows narrowing of variable subterms.

\begin{theorem}[\cite{payet18}]\label{thm:term-criterion-trs}
    Let $\cR$ be a TRS. If a term $\Fsym(s_1,\dots,s_n)$
    starts an infinite rewrite sequence \wrt{} $\unf(\cR)$
    then the term $\fsym(s_1,\dots,s_n)$
    starts an infinite rewrite sequence \wrt{} $\cR$.
\end{theorem}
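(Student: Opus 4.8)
The plan is to isolate a \emph{soundness} property of the unfolding and then lift it to infinite derivations. For a term $t$, write $t^\flat$ for the term obtained by replacing the root symbol of $t$ with the corresponding unmarked symbol when that root is some marked symbol $\Fsym$, and leaving $t$ unchanged otherwise. Since $\flat$ touches only the root, it commutes with taking proper subterms and with plugging into a context at a non-root position, and $(t\sigma)^\flat = t^\flat\sigma$ whenever the root of $t$ is not a variable. I will also use the invariant that the left- and right-hand sides of every element of $\unf(\cR)$ carry a marked symbol at the root, while everything strictly below such a root — and everything exposed by a narrowing or rewrite step performed below the root — is a term over the signature of $\cR$.

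The core lemma I would prove is: for every $(u,v) \in \unf(\cR)$ and every substitution $\sigma$, there is a context $C$ with $u^\flat\sigma \ra^+_\cR C[v^\flat\sigma]$, where $\ra^+_\cR$ denotes one or more $\cR$-rewrite steps. The proof is by induction on the inductive definition of $\unf(\cR)$. In the base case $(u,v)$ is a dependency pair of $\cR$: $u = \Fsym(s_1,\dots,s_n)$, $v = \Gsym(t_1,\dots,t_m)$, with $(\fsym(s_1,\dots,s_n),t) \in \cR$ and $\gsym(t_1,\dots,t_m)$ occurring in $t$; then $u^\flat\sigma = \fsym(s_1,\dots,s_n)\sigma \ra_\cR t\sigma$, and $t\sigma$ has $v^\flat\sigma$ at the position of $\gsym(t_1,\dots,t_m)$ in $t$, which provides $C$. (For use as a narrowing partner, a rule $(\ell,\rho)$ of $\cR$ trivially satisfies the same property, with $C$ the empty context.) In the induction step an element $(u_1,v_1) \in \unf(\cR)$ is narrowed with a rule or dependency pair of $\cR$; I would treat forward and backward closure by the same pattern, distinguishing whether the narrowed position $p$ is the root (a dependency-pair step) or lies below it (a rule step, possibly at a variable position). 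When $p = \epsilon$, the most general unifier $\theta$ makes the two sides being composed equal; feeding $\theta\sigma$ to the induction hypothesis and to the corresponding base fact, concatenating the two derivations, and rewriting each $w^\flat\theta\sigma$ as $(w\theta)^\flat\sigma$ yields exactly the claim for the new pair. When $p \ne \epsilon$, $\theta$ makes the subterm at $p$ equal to the chosen rule's left-hand side, so one extra $\ra_\cR$-step below the root of the relevant side turns $w^\flat\theta\sigma$ into the $\flat$-image of the narrowed term; if $p$ is a variable position, $\theta$ substitutes the rule's left-hand side for that variable and the same extra step applies. The backward-closure cases are the mirror image, with the extra step, or the base derivation, attached to the left end.

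Granting the lemma, the theorem follows quickly. We may assume $s_1,\dots,s_n$ are over the signature of $\cR$, so $\Fsym(s_1,\dots,s_n)$ carries a marked symbol only at its root; since marked symbols occur nowhere in $\cR$ and are never created below the root along an $\ra_{\unf(\cR)}$-derivation, every step of the given infinite $\ra_{\unf(\cR)}$-sequence from $\Fsym(s_1,\dots,s_n)$ must occur at the root. Thus the sequence reads $b_0 \ra_{(\rho_0,\epsilon)} b_1 \ra_{(\rho_1,\epsilon)} b_2 \ra_{(\rho_2,\epsilon)} \cdots$ with $b_0 = \Fsym(s_1,\dots,s_n)$, each $\rho_i = (u_i,v_i) \in \unf(\cR)$, and $b_i = u_i\tau_i$, $b_{i+1} = v_i\tau_i$ for suitable $\tau_i$. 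Applying the lemma to $(u_i,v_i)$ with $\sigma = \tau_i$, and using $b_i^\flat = u_i^\flat\tau_i$ and $b_{i+1}^\flat = v_i^\flat\tau_i$ (valid since $u_i,v_i$ have non-variable roots), gives $b_i^\flat \ra^+_\cR C_i[b_{i+1}^\flat]$ for all $i$. Closing $\ra_\cR$ under contexts and concatenating yields the infinite $\ra_\cR$-sequence $b_0^\flat \ra^+_\cR C_0[b_1^\flat] \ra^+_\cR C_0[C_1[b_2^\flat]] \ra^+_\cR \cdots$ starting from $b_0^\flat = \fsym(s_1,\dots,s_n)$.

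The step I expect to be the main obstacle is the bookkeeping inside the induction for the soundness lemma: keeping straight how the narrowing unifier $\theta$ composes with the arbitrary instantiating substitution $\sigma$ (so that $\theta\sigma$ is indeed what activates the induction hypothesis), and being scrupulous that $\flat$ is applied only at roots — which the marked-symbols-at-roots invariant guarantees. The same invariant is also what makes it harmless that, unlike for $\oc(\cR)$, the construction of $\unf(\cR)$ allows narrowing variable subterms: the unifier then merely replaces that variable by a term whose root is a defined symbol of $\cR$, so only genuine $\ra_\cR$-redexes are added.
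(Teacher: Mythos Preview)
The paper does not prove this theorem: it is quoted verbatim from~\cite{payet18} and used as a black box, so there is no in-paper proof to compare against. Your proposal is a sound and standard soundness argument for dependency-pair--based unfoldings; the core lemma (each unfolded pair simulates at least one $\cR$-step up to a surrounding context after unmarking the root) and the way you thread the narrowing mgu $\theta$ into the induction hypothesis via $\theta\sigma$ are exactly what is needed, and your handling of the variable-position case is correct since the unifier simply instantiates that variable to (a renaming of) the rule's left-hand side, creating a genuine $\cR$-redex.
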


Note that this theorem only states an implication, not an
equivalence, contrary to Theorem~\ref{thm:observing-termination-lp}
below in logic programming (where an ``if and only if'' is stated).
Therefore, we do not know whether it is restrictive to detect
infinite rewrite sequences \wrt{} $\unf(\cR)$, instead of $\cR$
(\ie we do not know whether the existence of an infinite rewrite
sequence \wrt{} $\cR$ necessarily implies that of an infinite
rewrite sequence \wrt{} $\unf(\cR)$).

\begin{example}[Example~\ref{ex:trs-intro-2} continued]
    \label{ex:trs-intro-simplified}
    Let $\cR$ be the TRS under consideration.
    The following rules are dependency pairs of $\cR$ obtained
    from $r_1$ and $r_3$ respectively:
    \[r'_1 = \big(\Fsym(x), \Hsym(x,\one)\big) = (u'_1, v'_1)
    \qquad
    r'_3 = \big(\Hsym(x,\zero), \Fsym(\fsym(x))\big)\]
    Here, $r'_1$ allows us to get rid of the context
    $c = \gsym(\square,x)$ that occurs in $r_1$.
    Let us simplify $\phi$ into the relation $\phi'$ that binds
    any term to its instances (so, compared to $\phi$,
    there are no more contexts). Of course, $\phi'$ is also
    compatible with the rewrite relation of term rewriting.
    We have $v'_1 \lpa_{(r_2,p')} \Hsym(x,\zero)$ where $p'$
    is the position of $\one$ in $v'_1$; so, from $r'_1$
    and $r_2$, the technique of \cite{payet18} produces
    $r''_1 = \big(\Fsym(x), \Hsym(x,\zero)\big) = (u''_1,v''_1)$.
    Moreover, $v''_1 \lpa_{(r'_3,\epsilon)} \Fsym(\fsym(x))$ so,
    from $r''_1$ and $r'_3$, we get
    $r = \big(\Fsym(x), \Fsym(\fsym(x))\big)$.
    Let $\pi = (r,\epsilon)$.
    We have the following loop \wrt{} $\phi'$:
    \[\overbrace{\Fsym(x)}^{a'_0} \ra_{\pi}
    \overbrace{\Fsym(\fsym(x))}^{a'_1 \in \phi'(a'_0)}\]
    Hence, there is an infinite rewrite sequence 
    $a'_0 \ra_{\pi} a'_1 \ra_{\pi} a'_2 \ra_{\pi} \cdots$
    with $a'_1 \in \phi'(a'_0)$, $a'_2 \in \phi'(a'_1)$, \dots{}
    It corresponds to that of Example~\ref{ex:trs-intro-2},
    but each successive application of $r_1,r_2,r_3$
    has been ``compressed'' into a single application of $r$.
    Moreover, it does not involve $c$, as well as the context
    that embeds an instance of $a_1$ in $a_4$, \emph{etc}.
    By Theorem~\ref{thm:term-criterion-trs}, $\fsym(x)$ starts
    an infinite rewrite sequence \wrt{} $\cR$.
\end{example}

\begin{example}\label{ex:trs-binary-intro-simplified}
    The following dependency pairs are obtained respectively
    from the rules $r_1$, $r_2$ and $r_3$ of the TRS $\cR$ of
    Example~\ref{ex:trs-binary-intro}:
    \begin{align*}
        r'_1 &= \big(\Fsym(x, c[y], x), \Hsym(x, y)\big) \\
        r'_2 &= \big(\Hsym(x, y), \Fsym(c[x], y, c[x])\big) \\
        r'_3 &= \big(\Fsym(x, \zero, x), \Fsym(c[x], c'[x], c[x])\big)
    \end{align*}
    From $r'_1$ and $r'_2$, the technique of~\cite{payet18} produces 
    $r''_1 = \big(\Fsym(x, c[y], x), \Fsym(c[x], y, c[x])\big)$
    and, from $r'_3$ and $r_4$, it produces 
    $r''_3 = \big(\Fsym(x, \zero, x), \Fsym(c[x], c[x], c[x])\big)$.
    Let $\pi_1 = (r''_1,\epsilon)$, $\pi_3 = (r''_3,\epsilon)$
    and $A_n = \Fsym\big(c^n[\zero],c^n[\zero],c^n[\zero]\big)$
    for all $n \in \nat$. We have the binary chain:
    \begin{align*}
        A_1 \ra_{\pi_1}
        \Fsym\big(c^2[\zero],\zero,c^2[\zero]\big) \ra_{\pi_3} 
        A_3 \ra^3_{\pi_1}
        \Fsym\big(c^6[\zero],\zero,c^6[\zero]\big) \ra_{\pi_3}
        A_7 \ra^7_{\pi_1} \cdots
    \end{align*}
    It corresponds to that of Example~\ref{ex:trs-binary-intro},
    but each application of $w_1$ (resp. $w_2$) has been ``compressed''
    into an application of $r''_1$ (resp. $r''_3$).
    By Theorem~\ref{thm:term-criterion-trs},
    $\fsym\big(c[\zero],c[\zero],c[\zero]\big)$ starts
    an infinite rewrite sequence \wrt{} $\cR$.
\end{example}

\subsubsection{Binary Unfolding}
The binary unfolding~\cite{gabbrielliG94,codishT99} is
a program transformation technique that has been introduced
in the context of Prolog's leftmost selection rule.
It transforms a LP $P$ into a (possibly infinite) set
$\binunf(P)$ of \emph{binary} rules (\ie rules, the 
right-hand side of which is a goal that contains at most
one term) inductively defined as follows: $\binunf(P)$
exactly consists of the rules constructed by narrowing
prefixes of right-hand sides of rules of $P$ using
elements of $\binunf(P)$; more precisely, for all
$(u,\sequence{v_1,\dots,v_n}) \in P$:
\begin{enumerate}[label=(\Alph*)]
    \item \label{enum-intro-1}
    for each $1 \leq i \leq n$, the goals
    $\sequence{v_1},\dots,\sequence{v_{i-1}}$ are narrowed,
    respectively, with $(u_1,\epsilon),\dots,(u_{i-1},\epsilon)$
    from $\binunf(P)$ (where $\epsilon$ is the empty goal)
    to obtain a corresponding instance of
    $(u,\sequence{v_i})$,
    \item for each $1 \leq i \leq n$, the goals
    $\sequence{v_1},\dots,\sequence{v_{i-1}}$ are narrowed,
    respectively, with $(u_1,\epsilon),\dots,(u_{i-1},\epsilon)$
    from $\binunf(P)$ and $\sequence{v_i}$ is also narrowed with
    $(u_i,\sequence{v})$ from $\binunf(P)$ to obtain
    a corresponding instance of $(u,\sequence{v})$,
    \item $\sequence{v_1},\dots,\sequence{v_n}$ are narrowed,
    respectively, with $(u_1,\epsilon),\dots,(u_n,\epsilon)$
    from $\binunf(P)$ to obtain a corresponding instance of
    $(u,\epsilon)$.
\end{enumerate}
Intuitively, each generated binary rule $(u,\sequence{v})$
specifies that, \wrt{} the leftmost selection rule,
$\sequence{u}$ can be rewritten, using rules of $P$, to a
goal $\sequence{v,\dots}$. 
\begin{theorem}[\cite{codishT99}]
    \label{thm:observing-termination-lp}
    Let $P$ be a LP and $\seqset{s_0}$ be a goal.
    Assume that the leftmost selection rule is used. Then,
    $\seqset{s_0}$ starts an infinite rewrite sequence \wrt{} $P$
    \emph{if and only if} it starts an infinite rewrite
    sequence \wrt{} $\binunf(P)$.
\end{theorem}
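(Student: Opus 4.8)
The plan is to prove the two implications of the equivalence separately, in each case by \emph{compressing} a rewrite sequence \wrt{} one system into a rewrite sequence \wrt{} the other and vice versa. The key structural fact is that each binary rule of $\binunf(P)$ is a faithful summary of a finite leftmost rewrite sequence \wrt{} $P$ (its \emph{soundness}), while conversely $\binunf(P)$ records, with most general unifiers, \emph{every} such finite computation (its \emph{completeness}). Both directions also rely on the classical lifting lemma for leftmost SLD-resolution: if an instance $\seqset{g}\delta$ starts a leftmost rewrite sequence \wrt{} $P$ reaching a goal $\seqset{h}$, then $\seqset{g}$ starts a leftmost rewrite sequence \wrt{} $P$ of the same length reaching a goal of which $\seqset{h}$ is an instance; in particular, if some instance of $\seqset{g}$ starts an infinite leftmost rewrite sequence then so does $\seqset{g}$. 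The same holds with $\binunf(P)$ in place of $P$. We may assume $\seqset{s_0}$ is nonempty, the empty case being trivial.

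For the direction ``$\binunf(P)$ infinite implies $P$ infinite'', I would first establish the soundness of $\binunf(P)$: for every $(u,\sequence{v}) \in \binunf(P)$, the goal $\sequence{u}$ starts a leftmost rewrite sequence \wrt{} $P$ of length $\geq 1$ reaching a goal whose leftmost term is $v$, and for every $(u,\epsilon) \in \binunf(P)$ the goal $\sequence{u}$ reaches $\epsilon$ by such a sequence. This is proved by induction on the inductive construction of $\binunf(P)$: a rule of $P$ contributes one rewrite step with $P$, and cases (A), (B), (C) of the definition compose that step with the leftmost $P$-sequences already provided, by the induction hypothesis, for the binary rules used in the narrowings; the lifting lemma is used to instantiate these sub-sequences so that they fit together. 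Granting soundness, take an infinite leftmost rewrite sequence $\seqset{s_0} \lpa_{\binunf(P)} \seqset{s_1} \lpa_{\binunf(P)} \cdots$ \wrt{} $\binunf(P)$. Its $i$-th step rewrites the leftmost term $t$ of $\seqset{s_i}$ using a binary rule with head $u$, via $\theta = \mgu(t,u)$; by soundness together with the lifting lemma applied to the whole goal $\seqset{s_i}$, this step can be replaced by a leftmost rewrite sequence \wrt{} $P$ of length $\geq 1$ from $\seqset{s_i}$ to a goal of which $\seqset{s_{i+1}}$ is an instance. Concatenating these infinitely many finite, nonempty rewrite sequences yields an infinite leftmost rewrite sequence \wrt{} $P$ from $\seqset{s_0}$.

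For the direction ``$P$ infinite implies $\binunf(P)$ infinite'', I would first establish the completeness of $\binunf(P)$, proved by induction on the length of the leftmost $P$-computation using cases (A), (B), (C): whenever a one-term goal $\sequence{t}$ has a finite leftmost rewrite sequence \wrt{} $P$ reaching $\epsilon$, with computed substitution $\gamma$ on $\vars(t)$, there is a rule $(u,\epsilon) \in \binunf(P)$ with $\mgu(t,u)$ at least as general as $\gamma$; and whenever $\sequence{t}$ has a leftmost $P$-sequence after which the leftmost term is some $v$ still descending from $t$, there is a binary rule $(u,\sequence{v'}) \in \binunf(P)$ exhibiting the same call in most general form. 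Now take an infinite leftmost rewrite sequence \wrt{} $P$ from $\seqset{s_0} = \sequence{t_1,\dots,t_k}$. Because the selection is leftmost, the terms to the right of the one being rewritten are never touched, so there is a well-defined non-decreasing index pointing to the original $t_j$ currently being processed; as long as this index advances past a term $t_j$, the sub-sequence that consumed $t_j$ is a finite refutation, summarised by completeness by some $(u_j,\epsilon) \in \binunf(P)$, and the corresponding $\binunf(P)$-step deletes $t_j$ from the front of the goal, keeping, by the lifting lemma, a goal at least as general as the one the $P$-sequence actually reached, which therefore still starts an infinite leftmost $P$-sequence. This happens at most $k$ times, after which the $P$-sequence is trapped forever inside the computation of a single term $s$, so $\sequence{s}$ (up to instantiation, hence by lifting $\sequence{s}$ itself) starts an infinite leftmost $P$-sequence. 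Its first step uses a rule $(u,\sequence{b_1,\dots,b_m}) \in P$ with $m \geq 1$; there is a unique index $j$ such that $b_1,\dots,b_{j-1}$ are refuted and the sequence is then trapped inside $b_j$. Feeding case (A) of the definition of $\binunf(P)$ with the rules $(u_1,\epsilon),\dots,(u_{j-1},\epsilon) \in \binunf(P)$ summarising those refutations yields a rule $(u',\sequence{b'_j}) \in \binunf(P)$ for which $\mgu(s,u')$ is defined, so $\sequence{s} \lpa_{\binunf(P)} \sequence{b'_j}\theta$, and by the lifting lemma the one-term goal $\sequence{b'_j}\theta$ again starts an infinite leftmost $P$-sequence. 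Iterating this last step from $\sequence{b'_j}\theta$ produces the required infinite leftmost rewrite sequence \wrt{} $\binunf(P)$ from $\seqset{s_0}$.

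The main obstacle is the completeness property used in the second direction: one must show that the fixpoint definition of $\binunf(P)$ captures \emph{every} finite leftmost computation of a term, and does so with \emph{most general} substitutions, so that the unifiers $\theta = \mgu(t,u)$ arising in the reconstructed $\binunf(P)$-sequence are always defined and keep the reconstructed goals at least as general as those actually reached \wrt{} $P$. This demands a careful induction on computation length with precise bookkeeping of the mgus and variable renamings --- essentially the standard switching and independence (mgu) lemmas for SLD-resolution --- and is where the real work lies; once it and the easier soundness lemma are in place, both implications follow from the expansion/compression arguments sketched above.
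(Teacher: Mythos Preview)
The paper does not contain a proof of this theorem: it is stated with the citation \cite{codishT99} and used as an imported result, with no argument given in the paper itself. There is therefore nothing in the paper to compare your proposal against.

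That said, your sketch is a reasonable outline of the standard soundness/completeness argument one would expect for such a result: each binary rule of $\binunf(P)$ summarises a finite leftmost $P$-derivation (soundness), and conversely every finite leftmost $P$-derivation of a single atom is captured in most general form by some rule of $\binunf(P)$ (completeness), with the lifting lemma gluing pieces together. Your identification of the completeness direction as the place where the real technical work lies---the careful mgu bookkeeping and the induction on derivation length through cases (A), (B), (C)---is accurate. If you want to verify the details, the original reference \cite{codishT99} is where the actual proof resides.
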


So, in this context, suppose that $\seqset{s_0}$
has the form $\sequence{v_0,\dots,v_n}$ and that it
starts an infinite rewrite sequence \wrt{} $P$.
Then, by Theorem~\ref{thm:observing-termination-lp},
there is an infinite rewrite sequence 
\begin{equation}\label{eq:ex-observing-termination-lp}
    \seqset{s_0} \lpa^{\theta_0}_{(r_0,p_0)}
    \seqset{s_1} \lpa^{\theta_1}_{(r_1,p_1)}
    \cdots
\end{equation}
where $r_0,r_1,\dots$ are rules of $\binunf(P)$
(we decorate each step with the corresponding unifier).
As the leftmost selection rule is used, $p_i$ is the
position of the leftmost term of $\seqset{s_i}$, for all
$i \in \nat$. So, necessarily, there is a finite
(possibly empty) prefix
of~\eqref{eq:ex-observing-termination-lp} that ``erases''
a (possibly empty) prefix of $\seqset{s_0}$, \ie it has
the form $\seqset{s_0} \lpa^{\theta_0}_{(r_0,p_0)} \cdots
\lpa^{\theta_{m-1}}_{(r_{m-1},p_{m-1})} \seqset{s_m}$
with $\seqset{s_m} = \sequence{v_k,\dots,v_n}\theta_0\dots\theta_{m-1}$
and $\seqset{t_m} = \sequence{v_k\theta_0\dots\theta_{m-1}}$
starts an infinite rewrite sequence of the form
$\seqset{t_m} \lpa_{(r_m,p'_m)} \seqset{t_{m+1}}
\lpa_{(r_{m+1},p'_{m+1})} \cdots$.
As the goal $\seqset{t_m}$ is a singleton and the rules
$r_m,r_{m+1},\dots$ are binary, we necessarily have that
$\seqset{t_{m+1}},\seqset{t_{m+2}},\dots$ are
singletons and $p'_m = p'_{m+1} = \cdots = \sequence{1}$
(because, in logic programming, rewriting only takes place
at the root position of a term of a goal).

So, from the non-termination point of view, and as far as
Prolog's leftmost selection rule is concerned, it is not
restrictive to consider a simplified form of logic
programming where one only rewrites singleton goals using
binary rules. Indeed, we have shown above that if $P$ is
non-terminating in the full setting then $\binunf(P)$ is
necessarily non-terminating in this simplified one.
Moreover, the leftmost selection rule is the most used by far
(\eg it is the only considered one in the Termination
Competition~\cite{termcomp}).
In Section~\ref{sect:ars}, for all LPs $P$, we define a
rewrite relation $\hra_P$ that models this simplified
form of logic programming. It enjoys an interesting closure
property (Lemma~\ref{lem:closed-subs-lp}) that we will use
to detect binary chains (Section~\ref{sect:rec-pairs}).

\begin{example}\label{ex:lp-intro-simplified}
    In Example~\ref{ex:lp-intro-2}, the binary unfolding of
    the LP $P$ under consideration produces the rule 
    $r' = \big(\psym(\fsym(x,\zero)), \sequence{\psym(x)}\big)$
    from $r$ (using~\ref{enum-intro-1} in the definition
    of $\binunf(P)$ above, with $i = 1$). So, compared to $r$,
    we do not have the context
    $\seqset{c} = \sequence{\square,\qsym(x)}$ anymore and we can
    simplify $\phi$, \ie we consider $\phi'$ that relates any goal
    $\seqset{s}$ to all the goals that are more general than
    $\seqset{s}$. Of course, $\phi'$ is also compatible with
    the rewrite relation of LPs. Let $\pi = (r',\sequence{1})$.
    We have this loop \wrt{} $\phi'$:
    \[\underbrace{\sequence{\psym(\fsym(x,\zero))}}_{\seqset{a_0}}
    \lpa_{\pi}
    \underbrace{\sequence{\psym(x)}}_{\seqset{b_0} \in \phi'(\seqset{a_0})}\]
    Hence, there is an infinite rewrite sequence 
    $\seqset{a_0} \lpa_{\pi} \seqset{b_0}
    \lpa_{\pi} \seqset{b_1} \lpa_{\pi} \cdots$
    where $\seqset{b_0} \in \phi'(\seqset{a_0})$,
    $\seqset{b_1} \in \phi'(\seqset{b_0})$, \dots{}
    It corresponds to that of Example~\ref{ex:lp-intro-2}, but
    it does not involve the context that embeds a more general goal
    than $\seqset{a_0}$ in $\seqset{a_1}$, the context that embeds
    a more general goal than $\seqset{a_1}$ in $\seqset{a_2}$,
    \dots; moreover, as $\seqset{a_0}, \seqset{b_0}, \seqset{b_1}, \dots$
    consist of only one term, the narrowing steps are all performed at
    position $\sequence{1}$. By Theorem~\ref{thm:observing-termination-lp},
    $\seqset{a_0}$ starts an infinite rewrite sequence \wrt{} $P$.
\end{example}

\subsection{Content of the Paper}
After presenting some preliminary material in
Section~\ref{sect:prel} and abstract reduction systems
in Section~\ref{sect:ars}, we describe, in a generic way,
loops in Section~\ref{sect:loops} and binary chains in
Section~\ref{sect:binary-chains}, together with an
automatable approach for the detection of the latter.
Then, in Section~\ref{sect:implementation}, we compare 
our implementation \nti{} with the tools that participated
in the Termination Competition~2023~\cite{termcomp}.
Finally, we present related work in Section~\ref{sect:rel-work}
and conclude in Section~\ref{sect:conclusion}.

\section{Preliminaries}
\label{sect:prel}
We introduce some basic notations and definitions.
First, in Section~\ref{sect:finite-sequence}, we
consider finite sequences and, in
Section~\ref{sect:binary-relation},
binary relations and the related notions of
chain and compatibility.    
Then, in Sections~\ref{sect:term} and~\ref{sect:subs},
we define terms, contexts, substitutions and
unifiers. Finally, in Section~\ref{sect:trs-lp}, we
present term rewriting and logic programming.

We let $\nat$ denote the set of non-negative integers.

\subsection{Finite Sequences}\label{sect:finite-sequence}
Let $A$ be a set.
We let $\seqset{A}$ denote the set of finite sequences
of elements $A$; it includes the empty sequence, denoted
as $\epsilon$. We generally (but not always) use the 
delimiters $\sequence{\cdot}$ for writing elements of
$\seqset{A}$.
Moreover, we use juxtaposition to denote the concatenation
operation, \eg $\sequence{a_0,a_1}\sequence{a_2,a_3} =
\sequence{a_0,a_1,a_2,a_3}$  and
$a_0\sequence{a_1,a_2} = \sequence{a_0,a_1,a_2}$.
The length of $w \in \seqset{A}$ is denoted as $\length{w}$
and is inductively defined as:  $\length{w} = 0$ if
$w = \epsilon$ and $\length{w} = 1 + \length{w'}$ if
$w = aw'$ for some $a \in A$ and $w' \in \seqset{A}$.

\subsection{Binary Relations}\label{sect:binary-relation}
A \emph{binary relation} $\psi$ from a set $A$ to a set $B$
is a subset of $A \times B$. For all $a \in A$, we let
$\psi(a) = \{b \in B \mid (a,b) \in \psi\}$.
Instead of $(a,b) \in \psi$, we may write $a \mathop{\psi} b$
(\eg for binary relations that have the form of an arrow) or
$b \in \psi(a)$. We let
$\psi^{-1} = \{(b,a) \in B \times A \mid (a,b) \in \psi\}$
be the \emph{inverse} of $\psi$.
A \emph{function} $f$ from $A$ to $B$ is a binary relation
from $A$ to $B$ which is such that for all
$a \in A$, $f(a)$ consists of exactly one element.

A binary relation $\phi$ on a set $A$ is a subset of
$A^2 = A \times A$. For all $\varphi \subseteq A^2$, we let
$\phi \circ \varphi$ denote the \emph{composition} of
$\phi$ and $\varphi$:
\[\phi \circ \varphi = \left\{(a,a') \in A^2 \;\middle\vert\;
\exists a_1 \in A \ (a,a_1) \in \phi \land
(a_1,a') \in \varphi \right\}\]
We let $\phi^0$ be the identity relation and,
for any $n\in\nat$, $\phi^{n+1}=\phi^n \circ \phi$.
Moreover, $\phi^+ = \cup \{\phi^n \mid n > 0 \}$
(resp. $\phi^* = \phi^0 \mathop{\cup} \phi^+$) denotes
the transitive (resp. reflexive and transitive) \emph{closure}
of $\phi$.
\begin{definition}\label{def:chain}
  Let $\phi$ be a binary relation on a set $A$.
  A \emph{$\phi$-chain}, or simply \emph{chain} if $\phi$ is
  clear from the context, is a (possibly infinite) sequence
  $a_0, a_1, \dots$ of elements of $A$ such that
  $a_{n+1} \in \phi(a_n)$ for all  $n \in \nat$. For
  binary relations that have the form of an arrow, \eg $\rra$,
  we simply write it as $a_0 \rra a_1 \rra \cdots$.
\end{definition}

The next definition resembles that of a
\emph{simulation relation}~\cite{park81} in a
state transition system: $\rra$ corresponds to
the transition relation of the system and $\phi$
to the simulation.
\begin{definition}\label{def:compatibility}
  Let $A$ be a set. We say that $\rra \subseteq A^2$ and
  $\phi \subseteq A^2$ are \emph{compatible} if for all
  $a,a',a_1 \in A$ there exists $a'_1 \in A$ such that
  \[\left(a' \in \phi(a) \land a \rra a_1\right)
  \text{ implies }
  \left(a'_1 \in \phi(a_1) \land a' \rra a'_1\right)\]
  Equivalently, $(\phi^{-1} \circ {\rra})
  \subseteq ({\rra} \circ \phi^{-1})$.
  This is illustrated by the following diagram:
  \begin{center}
    \begin{tikzcd}
      a \arrow[r, "\rra"] \arrow[d, "\phi"'] & a_1 \arrow[dashed, d, "\phi"] \\
      a' \arrow[dashed, r, "\rra"] & a'_1
    \end{tikzcd}
  \end{center}
  (solid (resp. dashed) arrows represent universal
  (resp. existential) quantification).
\end{definition}

The following result is straightforward.
\begin{lemma}\label{lem:compatibility-composition}
  Let $\rra$, $\hookrightarrow$ and $\phi$ be binary relations
  on a set $A$. If $\rra$ and $\phi$ are compatible and
  $\hookrightarrow$ and $\phi$ are compatible then
  $(\rra \circ \hookrightarrow)$ and $\phi$ are compatible.
\end{lemma}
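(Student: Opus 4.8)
The statement to prove is Lemma~\ref{lem:compatibility-composition}: if $\rra$ and $\phi$ are compatible, and $\hra$ and $\phi$ are compatible, then $(\rra \circ \hra)$ and $\phi$ are compatible. The natural approach is a direct diagram chase using the definition of compatibility, simply pasting the two compatibility squares side by side. I would spell out the hypothesis concretely: take arbitrary $a, a', a_2 \in A$ with $a' \in \phi(a)$ and $a \mathbin{(\rra \circ \hra)} a_2$. By definition of composition there is some intermediate $a_1 \in A$ with $a \rra a_1$ and $a_1 \hra a_2$.

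The core of the proof is then two applications of Definition~\ref{def:compatibility}. First, since $\rra$ and $\phi$ are compatible and we have $a' \in \phi(a)$ together with $a \rra a_1$, there exists $a'_1 \in A$ with $a'_1 \in \phi(a_1)$ and $a' \rra a'_1$. Next, since $\hra$ and $\phi$ are compatible and we now have $a'_1 \in \phi(a_1)$ together with $a_1 \hra a_2$, there exists $a'_2 \in A$ with $a'_2 \in \phi(a_2)$ and $a'_1 \hra a'_2$. Combining $a' \rra a'_1$ and $a'_1 \hra a'_2$ gives $a' \mathbin{(\rra \circ \hra)} a'_2$, and we already have $a'_2 \in \phi(a_2)$. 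This is exactly the conclusion required for $(\rra \circ \hra)$ and $\phi$ to be compatible, so we are done.

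I do not expect any real obstacle here — the paper itself flags the result as "straightforward." If one wanted an even slicker argument, one could instead work with the equivalent inclusion formulation $\phi^{-1} \circ {\rra} \subseteq {\rra} \circ \phi^{-1}$: from the two hypotheses $\phi^{-1} \circ {\rra} \subseteq {\rra} \circ \phi^{-1}$ and $\phi^{-1} \circ {\hra} \subseteq {\hra} \circ \phi^{-1}$, one computes $\phi^{-1} \circ ({\rra} \circ {\hra}) = (\phi^{-1} \circ {\rra}) \circ {\hra} \subseteq ({\rra} \circ \phi^{-1}) \circ {\hra} = {\rra} \circ (\phi^{-1} \circ {\hra}) \subseteq {\rra} \circ ({\hra} \circ \phi^{-1}) = ({\rra} \circ {\hra}) \circ \phi^{-1}$, using only associativity of relational composition and monotonicity of $\circ$ under inclusion. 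I would probably present the element-wise diagram chase as the main proof since it matches the diagrammatic style of Definition~\ref{def:compatibility}, and perhaps remark on the inclusion-based one-liner as an alternative. The only thing to be careful about is keeping the quantifier alternation straight — the intermediate witnesses $a_1$ (universally given, from the composition) and $a'_1$ (existentially produced, from the first compatibility) must not be conflated.
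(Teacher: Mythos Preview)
Your proposal is correct and follows exactly the paper's approach: the paper's proof consists solely of the two-square diagram you describe, without any accompanying prose. Your element-wise chase is simply that diagram spelled out in words, and your inclusion-based alternative is a valid extra observation not present in the paper.
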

\begin{proof}
  Using the same notations as in
  Definition~\ref{def:compatibility},
  we have:
  \begin{center}
    \begin{tikzcd}
      a \arrow[r, "\rra"] \arrow[d, "\phi"'] &
      a_1 \arrow[r, "\hookrightarrow"] \arrow[dashed, d, "\phi"] &
      a_2 \arrow[dashed, d, "\phi"] \\
      a' \arrow[dashed, r, "\rra"] &
      a'_1 \arrow[dashed, r, "\hookrightarrow"] & a'_2
    \end{tikzcd}
  \end{center}
\end{proof}

\subsection{Terms}\label{sect:term}
We use the same definitions and notations as~\cite{baaderN98} for terms.
\begin{definition}
  A \emph{signature} $\Sigma$ is a set of \emph{function symbols},
  each element of which has an \emph{arity} in $\nat$, which
  is the number of its arguments. For all $n\in\nat$, we denote the
  set of all $n$-ary elements of $\Sigma$ by $\Sigma^{(n)}$. The
  elements of $\Sigma^{(0)}$ are called \emph{constant symbols}.
\end{definition}

Function symbols are denoted by letters or non-negative integers
in the \emph{sans serif} font, \eg $\fsym,\gsym$ or $\zero,\one$.
We frequently use the superscript notation to denote several
successive applications of a unary function symbol,
\eg $\ssym^3(\zero)$ is a shortcut for
$\ssym(\ssym(\ssym(\zero)))$ and $\ssym^0(\zero) = \zero$.

\begin{definition}
  Let $\Sigma$ be a signature and $X$ be a set of \emph{variables}
  disjoint from $\Sigma$. The set $\termset$ is
  defined as:
  \begin{itemize}
    \item $X \subseteq \termset$,
    \item for all $n\in\nat$, all $\fsym \in \Sigma^{(n)}$ and
    all $s_1,\dots,s_n \in \termset$, 
    $\fsym(s_1,\dots,s_n) \in \termset$.
  \end{itemize}
  For all $s \in \termset$, we let $\vars(s)$ denote the set of
  variables occurring in $s$. Moreover, for all
  $\sequence{s_1,\dots,s_n} \in \seqset{\termset}$, we let
  $\vars(\sequence{s_1,\dots,s_n}) =
  \vars(s_1) \cup \cdots \cup \vars(s_n)$.
\end{definition}

In order to simplify the statement of the definitions and theorems
of this paper, from now on we fix a signature $\Sigma$, an infinite
countable set $X$ of variables disjoint from $\Sigma$ and a constant
symbol $\square$ which does not occur in $\Sigma \cup X$.
\begin{definition}\label{def:term-context}
  A \emph{term} is an element of $\termset$ and 
  a \emph{goal} an element of $\seqset{\termset}$.
  Moreover, a \emph{context} is an element of $\termsett$ that
  contains at least one occurrence of $\square$
  and a \emph{goal-context} is a finite sequence
  of the form
  $\sequence{s_1,\dots,s_i,\square,s_{i+1},\dots,s_n}$
  where all the $s_i$'s are terms. 
  A context $c$ can be seen as a term with ``holes'', represented
  by $\square$, in it. For all $t \in \termsett$, we let $c[t]$
  denote the element of $\termsett$ obtained from $c$ by replacing
  all the occurrences of $\square$ by $t$. We use the superscript
  notation for denoting several successive embeddings of a context 
  into itself: $c^0 = \square$ and, for all $n \in \nat$,
  $c^{n+1} = c[c^n]$.
  Identically, a goal-context $\seqset{c}$ can be seen as
  a goal with a hole, represented by $\square$. For all
  $\seqset{t} \in \seqset{\termset}$, we let
  $\seqset{c}[\seqset{t}]$ denote the goal obtained
  from $\seqset{c}$ by replacing $\square$ by the
  elements of $\seqset{t}$.
\end{definition}

Terms are generally denoted by $a,s,t,u,v$, variables by $x,y,z$
and contexts by $c$, possibly with subscripts and primes.
Goals and goal-contexts are denoted using an overbar.

The notion of position in a term is needed to
define the operational semantics of term rewriting 
(see Definition~\ref{def:trs-lp-rew-rel}).

\begin{definition}
  The set of \emph{positions} of $s\in \termsett$, denoted as
  $\pos(s)$, is a subset of $\seqset{\nat}$ which is inductively
  defined as:
  \begin{itemize}
    \item if $s \in X$, then
    $\pos(s) = \{\epsilon\}$,
    \item if $s = \fsym(s_1,\dots,s_n)$ then
    $\pos(s) = \{\epsilon\} \cup
    \bigcup_{i = 1}^n \{ip \mid p \in \pos(s_i)\}$.
  \end{itemize}
  The position $\epsilon$ is called the \emph{root position}
  of $s$ and the function or variable symbol at this position
  is the \emph{root symbol}.
  For all $p\in\pos(s)$, the \emph{subterm of $s$ at position $p$},
  denoted by $\subterm{s}{p}$ is inductively defined as:
  $\subterm{s}{\epsilon} = s$ and
  $\subterm{\fsym(s_1,\dots,s_n)}{ip'} = \subterm{s_i}{p'}$.
  Moreover, for all $t\in\termset$, we denote by $s[t]_p$ the term
  that is obtained from $s$ by replacing the subterm at position $p$
  by $t$, \ie $s[t]_{\epsilon} = t$ and
  $\fsym(s_1,\dots,s_n)[t]_{ip'} =
  \fsym(s_1,\dots,s_i[t]_{p'},\dots,s_n)$.
\end{definition}

\subsection{Substitutions}\label{sect:subs}
\begin{definition}
  The set $\subsset$ of all \emph{substitutions} consists
  of the functions $\theta$ from $X$ to $\termset$ such that
  $\theta(x) \neq x$ for only finitely many variables $x$.
  The \emph{domain} of $\theta$ is the finite set
  $\dom(\theta)=\{x \in X \mid \theta(x) \neq x\}$.
  We usually write $\theta$ as
  $\{x_1\mapsto\theta(x_1), \dots, x_n\mapsto\theta(x_n)\}$
  where $\{x_1,\dots,x_n\} = \dom(\theta)$.
  A \emph{(variable) renaming} is a substitution that is a
  bijection on $X$.
\end{definition}

The application of a substitution $\theta$ to 
a term or context $s$ is denoted as $s\theta$
and is defined as:
\begin{itemize}
  \item $s\theta = \theta(s)$ if $s\in X$,
  \item $s\theta = \fsym(s_1\theta,\dots,s_n\theta)$
  if $s = \fsym(s_1,\dots,s_n)$.
\end{itemize}
This is extended to goals, \ie 
$\sequence{s_1,\dots,s_n}\theta =
\sequence{s_1\theta,\dots,s_n\theta}$.

\begin{definition}\label{def:instance-variant}
  Let $s,t \in \termset$.
  We say that $t$ is an \emph{instance} of $s$ if
  $t=s\theta$ for some $\theta \in \subsset$. Then,
  we also say that $s$ is \emph{more general than} $t$.
  If $\theta$ is a renaming, then $t$ is also called
  a \emph{variant} of $s$.
  These definitions straightforwardly extend to all 
  $\seqset{s}, \seqset{t} \in \seqset{\termset}$.
\end{definition}

\begin{definition}%
  The \emph{composition} of substitutions $\sigma$ and
  $\theta$ is the substitution denoted as $\sigma\theta$
  and defined as: for all $x \in X$,
  $\sigma\theta(x) = (\sigma(x))\theta$.
  We say that $\sigma$ is more general than $\theta$ if
  $\theta=\sigma\eta$ for some substitution $\eta$.
\end{definition}

The composition of substitutions is an associative operation, \ie
for all terms $s$ and all substitutions $\sigma$ and $\theta$,
$(s\sigma)\theta=s(\sigma\theta)$.
We use the superscript notation for denoting several successive
compositions of a substitution with itself: $\theta^0=\emptyset$
(the identity substitution) and, for all $n\in\nat$,
$\theta^{n+1}=\theta\theta^n=\theta^n\theta$.

\begin{definition}
  Let $s,t \in \termset$.
  We say that $s$ \emph{unifies} with $t$ (or that $s$ and $t$ unify)
  if $s\sigma=t\sigma$ for some $\sigma\in \subsset$. Then, $\sigma$
  is called a \emph{unifier} of $s$ and $t$ and $\mgu(s,t)$ denotes
  the \emph{most general unifier} (mgu) of $s$ and $t$, which is
  unique (up to variable renaming).
\end{definition}

We will frequently refer to the relations ``embeds an instance of''
(denoted as $\phiinst$) and ``embeds a more general term than''
(denoted as $\phimg$) defined as:
\begin{definition}\label{def:bin-rel}
  Using the usual notations for terms ($s,t$),
  goals ($\seqset{s},\seqset{t}$),
  contexts ($c$) and goal-contexts ($\seqset{c}$),
  we define:
  \begin{align*}
    \phiinst &= \left\{ \left(s,c[t]\right)
    \;\middle\vert\; t \text{ is an instance of } s
    \right\} \cup
    \left\{ \left(\seqset{s},\seqset{c}[\seqset{t}]\right)
    \;\middle\vert\;
    \seqset{t} \text{ is an instance of } \seqset{s}
    \right\} \\
    \phimg &= \left\{ \left(s,c[t]\right)
    \;\middle\vert\;
    t \text{ is more general than } s
    \right\} \cup
    \left\{ \left(\seqset{s},\seqset{c}[\seqset{t}]\right)
    \;\middle\vert\;
    \seqset{t} \text{ is more general than } \seqset{s}
    \right\}
  \end{align*}
\end{definition}

\subsection{Term Rewriting and Logic Programming}
\label{sect:trs-lp}
We refer to~\cite{baaderN98,terese03} for the basics of
term rewriting and to~\cite{lloyd87,apt97} for those
of logic programming.
For the sake of simplicity and harmonisation, we consider
the following notion of rule that encompasses TRS rules
and LP rules (usually, the right-hand side of a TRS rule
is a term, see Section~\ref{sect:intro-trs-lp}).

\begin{definition}\label{def:prog}
  A \emph{program} is a subset of
  $\termset \times \seqset{\termset}$,
  every element $(u,\seqset{v})$ of which is called
  a \emph{rule}. The term $u$ (resp. the (possibly
  empty) goal $\seqset{v}$) is the
  \emph{left-hand side} (resp. \emph{right-hand side}).
\end{definition}

The rules of a program allow one to rewrite terms and
goals. This is formalised by the following binary
relations, where $\ra_P$ corresponds to the
operational semantics of term rewriting and $\lpa_P$
to that of logic programming.
For all goals $\seqset{s}$ and rules
$(u,\seqset{v})$ and $(u',\seqset{v}')$, we write
$(u,\seqset{v}) \ll_{\seqset{s}} (u',\seqset{v}')$
to denote that $(u, \seqset{v})$ is a variant of
$(u',\seqset{v}')$ variable disjoint with $\seqset{s}$,
\ie for some renaming $\gamma$, we have
$u = u'\gamma$, $\seqset{v} = \seqset{v}'\gamma$ and
$\vars(u) \cap \vars(\seqset{s}) =
\vars(\seqset{v}) \cap \vars(\seqset{s}) = \emptyset$.
\begin{definition}\label{def:trs-lp-rew-rel}
  For all programs $P$, we let
  \[\ra_P = \bigcup \left\{\ra_r \;\middle\vert\; r\in P \right\}
  \quad\text{and}\quad
  \lpa_P = \bigcup \left\{\lpa_r \;\middle\vert\; r\in P \right\}\]
  where, for all $r \in P$,
  \[\ra_r = \bigcup \left\{ \ra_{(r,p)}
  \;\middle\vert\; p \in \seqset{\nat} \right\}
  \quad\text{and}\quad
  \lpa_r = \bigcup \left\{ \lpa_{(r,p)}
  \;\middle\vert\; p \in \seqset{\nat} \right\}
  \]
  and, for all $p \in \seqset{\nat}$,
  \begin{align*}
    \ra_{(r,p)} &= \left\{
      \big(s,t\big) \in \termset^2
      \;\middle\vert\;
      \begin{array}{l}
        r = (u, \sequence{v}), \ p \in \pos(s) \\
        \subterm{s}{p} = u\theta, \
        \theta \in \subsset \\
        t = s[v\theta]_p
      \end{array}
    \right\} \\
    \lpa_{(r,p)} &= \left\{
      \big(\seqset{s},\seqset{t}\big) \in \seqset{\termset}^2
      \;\middle\vert\;
      \begin{array}{l}
        \seqset{s} = \sequence{s_1,\dots,s_n}, \
        p \in \{\sequence{1},\dots,\sequence{n}\} \\
        (u, \sequence{v_1,\dots,v_m}) \ll_{\seqset{s}} r, \
        \theta = \mgu(s_p, u) \\
        \seqset{t} = \sequence{s_1,\dots,s_{p-1},v_1,
        \dots,v_m,s_{p-1},\dots,s_n}\theta
      \end{array}
    \right\}
  \end{align*}
\end{definition}
Concrete examples are provided in
Section~\ref{sect:intro-trs-lp}. We note that if the
rule $r$ has the form $(u,\seqset{v})$ with
$\length{\seqset{v}} \neq 1$ then
$\ra_{(r,p)} = \emptyset$ for all $p \in \seqset{\nat}$.
On the other hand, in the definition of $\lpa_{(r,p)}$,
we may have $\sequence{v_1,\dots,v_m} = \epsilon$
(\ie $m<1$). 

The next lemma states some closure properties of
$\ra_{(r,p)}$ and $\lpa_{(r,p)}$ under substitutions.
It is needed to prove compatibility of $\ra_r$ and
$\phimg$ (Lemma~\ref{lem:ins-mg-compatible} below)
as well as closure properties of abstract reduction
systems (see Section~\ref{sect:closure-subs}).
As $\ra_{(r,p)}$ relies on instantiation, its closure
property is almost straightforward. In contrast,
$\lpa_{(r,p)}$ relies on narrowing, so its closure
property is more restricted and, moreover, it is more
complicated to prove.

\begin{lemma}\label{lem:stability-rules}
  Let $r = (u,\sequence{v})$ be a rule and $\theta$
  be a substitution.
  \begin{itemize}
    \item For all positions $p$ and all terms $s,t$,
    we have: $s \ra_{(r,p)} t$ implies
    $s\theta \ra_{(r,p)} t\theta$.
    \item $\vars(v) \subseteq \vars(u)$ implies
    $\sequence{u\theta} \lpa_{(r,\sequence{1})}
    \sequence{v\theta}$.
  \end{itemize}
\end{lemma}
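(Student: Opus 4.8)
The plan is to establish the two items separately, each by induction on the structure of the position $p$ (for the first) and by a direct computation from the definition of $\lpa_{(r,\sequence{1})}$ (for the second). Both arguments are essentially bookkeeping around the definitions in Definition~\ref{def:trs-lp-rew-rel}, but the order in which substitutions are applied must be handled carefully.

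For the first item, suppose $s \ra_{(r,p)} t$, so that by definition $r = (u,\sequence{v})$, $p \in \pos(s)$, $\subterm{s}{p} = u\sigma$ for some $\sigma \in \subsset$, and $t = s[v\sigma]_p$. I would first record the two routine facts that $p \in \pos(s)$ implies $p \in \pos(s\theta)$ (positions are preserved under instantiation, since applying $\theta$ does not delete any function symbol), and that $\subterm{(s\theta)}{p} = (\subterm{s}{p})\theta$ together with $(s[w]_p)\theta = (s\theta)[w\theta]_p$; both are immediate inductions on $p$. Combining these, $\subterm{(s\theta)}{p} = (u\sigma)\theta = u(\sigma\theta)$ by associativity of composition, so taking the witness substitution to be $\sigma\theta$ we get $s\theta \ra_{(r,p)} (s\theta)[v(\sigma\theta)]_p = (s\theta)[(v\sigma)\theta]_p = (s[v\sigma]_p)\theta = t\theta$, as required.

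For the second item, assume $\vars(v) \subseteq \vars(u)$; I must exhibit a derivation $\sequence{u\theta} \lpa_{(r,\sequence{1})} \sequence{v\theta}$. Unfolding the definition of $\lpa_{(r,\sequence{1})}$ with the one-element goal $\seqset{s} = \sequence{u\theta}$: I need a variant $(u',\sequence{v_1,\dots,v_m}) \ll_{\sequence{u\theta}} r$, and then the result is $\sequence{v_1,\dots,v_m}\delta$ where $\delta = \mgu(u\theta, u')$. Pick the renaming $\gamma$ witnessing the variant so that $u' = u\gamma$, $\sequence{v_1,\dots,v_m} = \sequence{v}\gamma$ (writing $\sequence{v} = \sequence{v_1,\dots,v_m}\gamma^{-1}$, i.e. renaming the $v_i$ back), and $\vars(u') \cap \vars(u\theta) = \emptyset$. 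The crux is to verify that $u\theta$ and $u' = u\gamma$ unify and to identify $\mgu(u\theta,u\gamma)$ modulo renaming. Since $\gamma$ is a bijection on $X$ and its range of "moved" variables is disjoint from $\vars(u\theta)$, the substitution $\eta$ that acts as $\gamma^{-1}$ followed by $\theta$ on $\vars(u')$ and leaves every variable of $u\theta$ fixed is a well-defined unifier: $u'\eta = (u\gamma)\eta = u(\gamma\eta) = u\theta = (u\theta)\eta$ (the last equality because $\eta$ fixes $\vars(u\theta)$, and on $\vars(u\gamma)$ we have $\gamma\eta = \theta$ by construction). Hence a most general unifier $\delta$ exists, and $\delta$ is more general than $\eta$. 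Applying $\delta$ to $\sequence{v_1,\dots,v_m} = \sequence{v}\gamma$: using $\vars(v) \subseteq \vars(u)$, the variables of $\sequence{v}\gamma$ lie in $\vars(u\gamma) = \vars(u')$, on which $\delta$ agrees with $\eta$ up to the final renaming, so $\sequence{v}\gamma\delta = \sequence{v}\gamma\eta = \sequence{v}\theta = \sequence{v\theta}$ up to a renaming of $u\theta$'s own variables (which do not occur in $v\theta$ anyway, or can be absorbed). This yields $\sequence{u\theta} \lpa_{(r,\sequence{1})} \sequence{v\theta}$.

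I expect the main obstacle to be the second item: the mismatch between "most general unifier" (defined only up to renaming) and the concrete substitution $\eta$ I can write down by hand means the final goal $\sequence{v\theta}$ is only obtained up to a variable renaming, and I must argue that this renaming is harmless — either by invoking $\vars(v\theta) \subseteq \vars(u\theta)$ to show the renamed variables do not appear in the result, or by choosing the variant $(u',\dots) \ll_{\sequence{u\theta}} r$ carefully and noting $\mgu$ can be taken to restrict to the identity outside $\vars(u\theta) \cup \vars(u')$. The first item is routine once the two position-and-substitution lemmas are stated; the real care is all in tracking which substitution is applied where and in the handling of the fresh-variant renaming.
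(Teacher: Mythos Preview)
Your treatment of the first item is correct and matches the paper's proof essentially line for line; the only difference is that you mention proving the position-commutes-with-substitution facts by induction on $p$, whereas the paper simply uses them without comment.

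For the second item, you construct exactly the right substitution $\eta$ (the paper writes it as $\{x\gamma \mapsto x\theta \mid x\in\vars(u),\ x\gamma\neq x\theta\}$) and correctly verify it is a unifier of $u\gamma$ and $u\theta$. But then you stop, take $\delta$ to be \emph{some} mgu, and try to argue that the discrepancy between $\delta$ and $\eta$ is only a harmless renaming. This is where the argument becomes shaky: from ``$\eta$ is a unifier'' you only get $\eta = \delta\rho$ for some substitution $\rho$, and nothing yet forces $\rho$ to be a renaming on the variables you care about; neither of your proposed fixes (invoking $\vars(v\theta) \subseteq \vars(u\theta)$, or restricting the support of $\delta$) directly yields $v\gamma\delta = v\theta$. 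The paper sidesteps the issue by proving one more fact about $\eta$: for \emph{every} unifier $\sigma$ of $u\gamma$ and $u\theta$ one has $\eta\sigma = \sigma$ (variables outside $\dom(\eta)$ are trivial, and for $y = x\gamma \in \dom(\eta)$ with $x \in \vars(u)$ one computes $y\eta\sigma = x\theta\sigma = x\gamma\sigma = y\sigma$). This shows $\eta$ is itself a most general unifier, so one may take $\delta = \eta$ and obtain $v\gamma\eta = v\theta$ on the nose, with no renaming to chase. Adding this short step completes your argument and eliminates precisely the obstacle you flagged in your last paragraph.
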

\begin{proof}
  Let $p$ be a position and $s,t$ be some terms such that
  $s \ra_{(r,p)} t$. Then, by Definition~\ref{def:trs-lp-rew-rel},
  we have $p \in \pos(s)$, $\subterm{s}{p} = u\sigma$ and
  $t = s[v\sigma]_p$ for some substitution $\sigma$.
  We have $p \in \pos(s\theta)$ and
  $\subterm{s\theta}{p} = \subterm{s}{p}\theta = u\sigma\theta$.
  So, by Definition~\ref{def:trs-lp-rew-rel},
  $s\theta \ra_{(r,p)} s\theta[v\sigma\theta]_p$
  where
  $s\theta[v\sigma\theta]_p = s[v\sigma]_p\theta = t\theta$.
  Consequently, we have $s\theta \ra_{(r,p)} t\theta$.

  Now, suppose that $\vars(v) \subseteq \vars(u)$ and
  let us prove that
  $\sequence{u\theta} \lpa_{(r,\sequence{1})} \sequence{v\theta}$.
  Let $(u\gamma,\sequence{v\gamma})$ be a variant of $r$
  variable disjoint with $u\theta$, for some variable
  renaming $\gamma$. Let
  $\eta=\{x\gamma \mapsto x\theta \mid x\in\vars(u),\
  x\gamma\neq x\theta\}$.
  \begin{itemize}
    \item First, we prove that $\eta$ is a substitution.
    Let $(x \mapsto s)$ and $(y \mapsto t)$ be some bindings
    in $\eta$. By definition of $\eta$, we have
    $(x \mapsto s) = (x'\gamma \mapsto x'\theta)$ and
    $(y \mapsto t) = (y'\gamma \mapsto y'\theta)$ for some
    variables $x'$ and $y'$
    in $\vars(u)$. As $\gamma$ is a variable renaming, it is a
    bijection on $X$, so if $x=y$ then $x'=y'$ and hence
    $(x \mapsto s) = (y \mapsto t)$.
    Consequently, for any bindings $(x \mapsto s)$ and
    $(y \mapsto t)$ in $\eta$, $(x \mapsto s) \neq (y \mapsto t)$
    implies $x\neq y$. Moreover, by definition of $\eta$, for
    any $(x \mapsto s) \in \eta$ we have $x \neq s$.
    Therefore, $\eta$ is a substitution.
    \item Then, we prove that $\eta$ is a unifier of $u\gamma$
    and $u\theta$.
    \begin{itemize}
      \item Let $x\in\vars(u)$. Then, by definition of $\eta$,
      $x\gamma\eta=x\theta$. So, $u\gamma\eta = u\theta$.
      \item Let $y\in\dom(\eta)$. Then, $y=x\gamma$ for some
      $x\in\vars(u)$. Hence, $y\in\vars(u\gamma)$. As $u\gamma$
      is variable disjoint with $u\theta$, we have
      $y \not\in \vars(u\theta)$. Therefore, we have
      $\dom(\eta)\cap\vars(u\theta)=\emptyset$, so
      $u\theta\eta=u\theta$.
    \end{itemize}
    Consequently, we have $u\gamma\eta = u\theta\eta$.
    \item Finally, we prove that $\eta$ is more general than
    any other unifier of $u\gamma$ and $u\theta$. Let $\sigma$
    be a unifier of $u\gamma$ and $u\theta$. Then, for all
    $x\in\vars(u)$, we have $x\gamma\sigma=x\theta\sigma$.
    For all variables $y$ that do not occur in $\dom(\eta)$,
    we have $y\eta\sigma = y\sigma$ and, for all $y \in \dom(\eta)$,
    we have $y = x\gamma$ for some $x \in \vars(u)$, so
    $y\eta\sigma = x\gamma\eta\sigma = x\theta\sigma =
    x\gamma\sigma = y\sigma$.
    Hence, $\eta\sigma=\sigma$, \ie $\eta$ is more
    general than $\sigma$.
  \end{itemize}
  Therefore, $\eta=\mgu(u\gamma,u\theta)$.
  So, by Definition~\ref{def:trs-lp-rew-rel}, we have
  $\sequence{u\theta} \lpa_{(r,\sequence{1})}
  \sequence{v\gamma\eta}$.
  Note that for all $x\in\vars(v)$, we have $x\in\vars(u)$, so
  $x\gamma\eta=x\theta$ by definition of $\eta$. Hence,
  $v\gamma\eta = v\theta$. Finally, we have
  $\sequence{u\theta} \lpa_{(r,\sequence{1})}
  \sequence{v\theta}$.
\end{proof}

\begin{example}\label{ex:lp-not-closed-subs}
  Consider the rule
  $r= (u,\sequence{v}) =
  (\fsym(x,\one),\sequence{\fsym(\one,x)})$
  and the substitution
  $\theta = \{x \mapsto \zero, y\mapsto\zero\}$.
  We have $\sequence{u\theta} = \sequence{\fsym(\zero,\one)}$,
  $\sequence{v\theta} = \sequence{\fsym(\one,\zero)}$
  and $\sequence{u\theta} \lpa_{(r,\sequence{1})}
  \sequence{v\theta}$.
  For $s = \fsym(\zero,y)$, we also have 
  $\sequence{s} \lpa_{(r,\sequence{1})}
  \sequence{\fsym(\one,\zero)}$,
  but there is no rewriting of $\sequence{s\theta}$ with
  $r$ because $s\theta = \fsym(\zero,\zero)$ 
  does not unify with any variant of $u$.
\end{example}

The next compatibility results follow from 
Definitions~\ref{def:bin-rel}
and~\ref{def:trs-lp-rew-rel} and from
Lemma~\ref{lem:stability-rules}.
\begin{lemma}\label{lem:ins-mg-compatible}
  For all rules $r$, $\ra_r$ and $\phiinst$ are
  compatible, and so are $\lpa_r$ and $\phimg$.
\end{lemma}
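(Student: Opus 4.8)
The plan is to establish the two compatibility claims separately, since they concern different rewrite relations and different auxiliary relations, though both will reduce to Lemma~\ref{lem:stability-rules} together with a position-bookkeeping argument. In each case I fix $a, a', a_1$ with $a' \in \phi(a)$ and $a \rra a_1$, and I must produce $a'_1 \in \phi(a_1)$ with $a' \rra a'_1$; by Definition~\ref{def:compatibility} it suffices to handle a single step $\ra_{(r,p)}$ (resp.\ $\lpa_{(r,p)}$) for a fixed rule $r$, since $\ra_r$ (resp.\ $\lpa_r$) is the union over all positions.

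\emph{Compatibility of $\ra_r$ and $\phiinst$.} Here $a' \in \phiinst(a)$ means $a' = c[a\theta]$ for a context $c$ and substitution $\theta$ (I treat the term case; the goal case is analogous, replacing $c$ by a goal-context $\seqset{c}$ and using the obvious extension of the subterm/replacement machinery). From $a \ra_{(r,p)} a_1$ I get, by Lemma~\ref{lem:stability-rules} (first item), $a\theta \ra_{(r,p)} a_1\theta$. Now I push this step through the surrounding context: if $q$ is the position at which the hole of $c$ sits (say $c = c[\square]$ with the $\square$ at position $q$, for each occurrence), then a rewrite of $a\theta$ at position $p$ lifts to a rewrite of $c[a\theta]$ at position $qp$, yielding $c[a\theta] \ra_{(r,qp)} c[a_1\theta]$ — this is just the standard closure of $\ra_{(r,\cdot)}$ under contexts, which follows directly from Definition~\ref{def:trs-lp-rew-rel} by induction on $q$ since the subterm at $qp$ is still $u\sigma$ and replacement is local. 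Then $a'_1 := c[a_1\theta]$ satisfies $a'_1 \in \phiinst(a_1)$ (witnessed by the same $c$ and $\theta$) and $a' \ra_{(r,qp)} a'_1$, so $a' \ra_r a'_1$, as required. (If $c$ has several holes one either restricts $\phiinst$ to one-hole contexts — which is what the definitions in the paper seem to intend — or rewrites each copy in turn; I will follow whichever convention Section~\ref{sect:term} has fixed.)

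\emph{Compatibility of $\lpa_r$ and $\phimg$.} Here $a' \in \phimg(a)$ means, writing goals, $a' = \seqset{c}[\seqset{t}]$ where $\seqset{t}$ is more general than $\seqset{s} := a$, i.e.\ $a = \seqset{t}\sigma$ for some $\sigma$. The rewrite $a \lpa_{(r,p)} a_1$ selects some element $s_p$ of the goal $a$ and unifies it with a fresh variant of the left-hand side of $r$. The key point is that $s_p$ is the $\sigma$-instance of the corresponding element $t_p$ of $\seqset{t}$; since $s_p$ unifies with the (fresh, variable-disjoint) variant $u\gamma$ of $r$'s left-hand side, and $s_p = t_p\sigma$, the term $t_p$ \emph{also} unifies with $u\gamma$ — because a unifier of $t_p\sigma$ and $u\gamma$ composed with $\sigma$ gives a common instance of $t_p$ and $u\gamma$ (using variable-disjointness so that $\sigma$ does not touch $\gamma$'s variables). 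Let $\delta = \mgu(t_p, u\gamma)$. Then $\seqset{c}[\seqset{t}] \lpa_{(r, p')} \seqset{t'}$ at the position $p'$ of $t_p$ inside $\seqset{c}[\seqset{t}]$, where $\seqset{t'}$ is the corresponding narrowing result. It remains to check $\seqset{t'} \in \phimg(a_1)$, i.e.\ that the narrowing result $a_1$ (which is $\sequence{\dots, v_1, \dots, v_m, \dots}\theta$ for the "larger" mgu $\theta$) is an instance of $\seqset{t'}$. This is where the mgu property does the work: because $\delta = \mgu(t_p,u\gamma)$ is most general, the unifier obtained from $\theta$ and $\sigma$ on $t_p$ and $u\gamma$ factors through $\delta$, and composing that factoring substitution with the residual context $\seqset{c}$ (untouched on its relevant variables, again by disjointness) exhibits $a_1$ as an instance of $\seqset{t'}$. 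Formally this is essentially the "lifting lemma" for SLD-resolution; alternatively, I expect the intended argument is to first reduce to singleton goals and the case $\seqset{c} = \sequence{\square}$ via Lemma~\ref{lem:stability-rules} (second item, whose hypothesis $\vars(v)\subseteq\vars(u)$ will need to be available or else handled by renaming), and then add back the context and the instantiation $\sigma$.

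\emph{Main obstacle.} The term-rewriting half is routine context-pushing. The real work is the logic-programming half: one must show that narrowing commutes appropriately with \emph{generalisation} (the $\phimg$ direction), which is exactly the classical lifting lemma and is delicate precisely because $\lpa_{(r,p)}$ uses mgu's rather than arbitrary instantiation — as Example~\ref{ex:lp-not-closed-subs} warns, naive instance-closure fails, so the argument must go through \emph{more general}, not \emph{instance}, and must track the freshness of the rule variant carefully so that the substitutions $\sigma$ (from $\phimg$) and $\delta,\theta$ (from the two narrowing steps) interact only where intended. I would organise the proof so that all of this is localised to the single selected goal element via Lemma~\ref{lem:stability-rules}, keeping the surrounding goal-context inert throughout.
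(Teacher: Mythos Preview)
Your term-rewriting half matches the paper's proof essentially verbatim: apply Lemma~\ref{lem:stability-rules} to get $s\sigma \ra_{(r,p)} t\sigma$, then prefix the position with that of a hole in $c$.

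For the logic-programming half you correctly identify the heart of the matter as the SLD lifting lemma, and you sketch a direct proof of it. The paper takes the shorter route of simply \emph{citing} the One Step Lifting Lemma (Lemma~3.21 of Apt~\cite{apt97}): it picks a variant $r'$ of $r$ variable-disjoint from the larger goal $\seqset{s'} = \seqset{c}[\seqset{a}]$ (hence from $\seqset{a}$ and from $\seqset{c}$), invokes the lifting lemma to obtain $\seqset{a} \lpa_{(r,p)}^{\theta} \seqset{b}$ with $\seqset{b}$ more general than $\seqset{t}$, and then observes that the same step lifts to $\seqset{c}[\seqset{a}] \lpa_{(r,p'+p)}^{\theta} (\seqset{c}\theta)[\seqset{b}]$ because $r'$ is also disjoint from $\seqset{c}$. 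So your main line and the paper's agree; the paper just outsources the mgu-factoring argument you sketch.

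One caveat: your proposed ``alternative'' of reducing to singletons via the second item of Lemma~\ref{lem:stability-rules} does not work here and is not what the paper does. That item shows $\sequence{u\theta} \lpa_r \sequence{v\theta}$, i.e.\ closure of a very particular step under \emph{instantiation}, under the extra hypothesis $\vars(v)\subseteq\vars(u)$; it goes the wrong direction for $\phimg$ (which needs lifting to a \emph{more general} goal) and its hypothesis is not available for arbitrary rules. That item is used in the paper only for the restricted relation $\hra_P$, not for the present lemma. Drop that alternative and your plan is complete.
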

\begin{proof}
  Let $r$ be a rule.
  \begin{itemize}
    \item Let $s,t,s'$ be terms such that
    $s' \in \phiinst(s)$ and $s \ra_r t$.
    Then, by Definition~\ref{def:bin-rel}, we
    have $s' = c[s\sigma]$ for some context $c$
    and  some substitution $\sigma$.
    Moreover, by Definition~\ref{def:trs-lp-rew-rel},
    $s \ra_{(r,p)} t$ for some $p \in \seqset{\nat}$
    and $r$ has the form $(u,\sequence{v})$.
    So, by Lemma~\ref{lem:stability-rules}, we have
    $s\sigma \ra_{(r,p)} t\sigma$.
    Let $p'$ be the position of an occurrence of
    $\square$ in $c$. Then, 
    $c[s\sigma] \ra_{(r,p'p)} c[t\sigma]$ where
    $c[s\sigma] = s'$ and $c[t\sigma] \in \phiinst(t)$.
    So, we have proved that $s' \ra_r t'$ 
    for some $t' \in \phiinst(t)$, \ie that
    $\ra_r$ and $\phiinst$ are compatible.
    
    \item Let $\seqset{s},\seqset{t},\seqset{s'}$ be goals
    such that $\seqset{s'} \in \phimg(\seqset{s})$
    and $\seqset{s} \lpa_r \seqset{t}$.
    Then, by Definition~\ref{def:bin-rel}, we have
    $\seqset{s'} = \seqset{c}[\seqset{a}]$
    for some goal-context $\seqset{c}$
    and some goal $\seqset{a}$ that is more general
    than $\seqset{s}$.
    Moreover, by Definition~\ref{def:trs-lp-rew-rel},
    we have $\seqset{s} \lpa_{(r,p)} \seqset{t}$
    for some $p \in \seqset{\nat}$.
    Let $r'$ be a variant of $r$ that is 
    variable disjoint with $\seqset{s'}$.
    Then, $r'$ is also variable disjoint with $\seqset{a}$.
    So, by the One Step Lifting Lemma~3.21 at page~59
    of~\cite{apt97}, for some goal $\seqset{b}$ that 
    is more general than $\seqset{t}$, we have
    $\seqset{a} \lpa_{(r,p)}^{\theta} \seqset{b}$,
    where $r'$ and $\theta$ are respectively the variant
    of $r$ and the unifier used. As $r'$ is also variable
    disjoint with $\seqset{c}$, we have 
    $\seqset{c}[\seqset{a}] \lpa_{(r,p'+p)}^{\theta}
    (\seqset{c}\theta)[\seqset{b}]$ where $r'$ is the variant
    of $r$ used and $p'$ is the position of $\square$
    in $\seqset{c}$. We note that
    $\seqset{c}[\seqset{a}] = \seqset{s'}$ and
    $(\seqset{c}\theta)[\seqset{b}] \in \phimg(\seqset{t})$.
    So, we have proved that
    $\seqset{s'} \lpa_r \seqset{t'}$ for some
    $\seqset{t'} \in \phimg(\seqset{t})$, \ie that
    $\lpa_r$ and $\phimg$ are compatible.    
  \end{itemize}
\end{proof}

\begin{example}
  In Example~\ref{ex:trs-intro},
  $s = \gsym\big(\hsym(\fsym(x),\zero),x\big)$ 
  and $s \ra_{(r_3,p)} \gsym(\fsym^3(x),x)$
  where $p = \sequence{1}$.
  Let $\sigma = \{x\mapsto\zero\}$. Then,
  $s\sigma = \gsym\big(\hsym(\fsym(\zero),\zero),\zero\big)$
  and $s\sigma \ra_{(r_3,p)} \gsym(\fsym^3(\zero),\zero)$,
  \ie $s\sigma \ra_{(r_3,p)} \gsym(\fsym^3(x),x)\sigma$,
  where $s\sigma \in \phiinst(s)$ and
  $\gsym(\fsym^3(x),x)\sigma \in \phiinst(\gsym(\fsym^3(x),x))$.
\end{example}

\section{Abstract Reduction Systems}\label{sect:ars}
%
The following notion (see, \eg Chapter~2 of~\cite{baaderN98}
or Chapter~1 of~\cite{terese03}) generalises the semantics
of term rewriting and logic programming presented above.
\begin{definition}
  An \emph{abstract reduction system (ARS)} is a pair $\ars$
  consisting of a set $A$ and a \emph{rewrite relation}
  $\rra_{\Pi}$, which is the union of binary relations on $A$
  indexed by a set $\Pi$, \ie
  $\rra_{\Pi} = \bigcup \{\rra_{\pi} \mid \pi \in \Pi\}$. 
\end{definition}
We have $A = \termset$ in term rewriting and
$A = \seqset{\termset}$ in logic programming;
moreover, $\Pi$ is a program in both cases.

We formalise non-termination as the existence of an infinite
chain in an ARS.
\begin{definition}
  A \emph{chain} in an ARS $\cA = \ars$ is a (possibly infinite)
  $\rra_{\Pi}$-chain.
\end{definition}

\subsection{Closure Under Substitutions}
\label{sect:closure-subs}
In Section~\ref{sect:rec-pairs}, we prove that the
existence of finite chains $u_1 \rra^+_{\Pi} v_1$ and
$u_2 \rra^+_{\Pi} v_2$ of a special form implies that
of an infinite chain that involves instances of $u_1$,
$v_1$, $u_2$ and $v_2$ (see Definition~\ref{def:rec-pair}
and Corollary~\ref{coro:w1-w2}). The proof relies on
the following property of ARSs. In the rest of this paper,
for all ARSs $\ars$ and all
$w = \sequence{\pi_1,\dots,\pi_n}$ in
$\seqset{\Pi}$, we let
$\rra_w = (\rra_{\pi_1} \circ \cdots \circ \rra_{\pi_n})$,
where $\rra_{\epsilon}$ is the identity relation.

\begin{definition}\label{def:closed-subs}
  Let $\cA = \ars$ be an ARS where
  $A \subseteq \termset \cup \seqset{\termset}$.
  We say that $\cA$ is
  \emph{closed under substitutions} if, for all
  $s,t \in A$, all $w \in \seqset{\Pi}$ and all
  substitutions $\theta$, $s \rra_w t$ implies
  $s\theta \rra_w t\theta$.
\end{definition}

The following result is a consequence
of Lemma~\ref{lem:stability-rules}.
\begin{lemma}\label{lem:closed-subs-trs}
  For all programs $P$, $(\termset, \ra_P)$
  is closed under substitutions.
\end{lemma}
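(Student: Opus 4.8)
The plan is to reduce the statement to the single-step stability already established in Lemma~\ref{lem:stability-rules} and then propagate it along compositions by induction on the length of $w$.

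First I would record the key single-step fact: for every index $\pi \in \Pi = P$ and all terms $s,t$, $s \ra_\pi t$ implies $s\theta \ra_\pi t\theta$. Indeed, writing $\pi = r$, by Definition~\ref{def:trs-lp-rew-rel} we have $\ra_r = \bigcup\{\ra_{(r,p)} \mid p \in \seqset{\nat}\}$, so $s \ra_r t$ means $s \ra_{(r,p)} t$ for some position $p$; the first item of Lemma~\ref{lem:stability-rules} then gives $s\theta \ra_{(r,p)} t\theta$, hence $s\theta \ra_r t\theta$.

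Next I would observe that this stability under substitutions is preserved by composition: if $\ra$ and $\hookrightarrow$ are binary relations on $\termset$ such that $s \ra t$ implies $s\theta \ra t\theta$ and $s \hookrightarrow t$ implies $s\theta \hookrightarrow t\theta$ for every substitution $\theta$, then $\ra \circ \hookrightarrow$ enjoys the same property: given $s \mathbin{(\ra \circ \hookrightarrow)} t$, pick $s_1$ with $s \ra s_1 \hookrightarrow t$; then $s\theta \ra s_1\theta \hookrightarrow t\theta$, so $s\theta \mathbin{(\ra \circ \hookrightarrow)} t\theta$. Since the identity relation $\ra_\epsilon$ is trivially stable, a straightforward induction on $\length{w}$ — using the single-step fact for the head of $w$ and the composition fact for the induction step, since $\ra_{\pi w'} = \ra_\pi \circ \ra_{w'}$ — shows that $\ra_w$ is stable under substitutions for every $w \in \seqset{\Pi}$. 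This is precisely the assertion that $(\termset, \ra_P)$ is closed under substitutions.

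I do not anticipate a genuine obstacle here: the only point requiring a little care is the bookkeeping between $\ra_\pi$ (for an index $\pi = r \in P$) and the position-indexed relations $\ra_{(r,p)}$ for which Lemma~\ref{lem:stability-rules} is phrased; everything else is a routine induction.
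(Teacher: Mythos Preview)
Your proposal is correct and follows essentially the same route as the paper: both arguments establish single-step stability of $\ra_r$ from Lemma~\ref{lem:stability-rules} and then propagate it along $\ra_w$ by induction on $\length{w}$. Your presentation factors the inductive step through an explicit ``composition preserves stability'' observation, whereas the paper carries out the same step inline, but the substance is identical.
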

\begin{proof}
  Let $P$ be a program,
  $s,t$ be terms, $w \in \seqset{P}$ and $\theta$
  be a substitution. Suppose that $s \ra_w t$. 
  We prove by induction on $\length{w}$ that
  $s\theta \ra_w t\theta$.
  \begin{itemize}
    \item (Base: $\length{w} = 0$) Here, $\ra_w$ is the
    identity relation. As $s \ra_w t$, we have $s = t$,
    hence $s\theta = t\theta$, and so $s\theta \ra_w t\theta$.
    \item (Induction) Suppose that $\length{w} = n + 1$
    for some $n \in \nat$. Suppose also that for all terms
    $s',t'$, all $w' \in \seqset{P}$ with $\length{w'} = n$
    and all substitutions $\sigma$, $s' \ra_{w'} t'$ implies
    $s'\sigma \ra_{w'} t'\sigma$.
    As $\length{w} = n + 1$, we have $w = rw'$ for some
    $r \in P$ and some $w' \in \seqset{P}$ with
    $\length{w'} = n$. Therefore, $s \ra_r s' \ra_{w'} t$
    for some term $s'$. By definition of $\ra_r$, we have
    $r = (u, \sequence{v})$, so, by Lemma~\ref{lem:stability-rules},  
    $s\theta \ra_r s'\theta$. As $\length{w'} = n$,
    by induction hypothesis we have
    $s'\theta \ra_{w'} t\theta$. Hence, 
    $s\theta \ra_r s'\theta \ra_{w'} t\theta$,
    \ie $s\theta \ra_w t\theta$.
  \end{itemize}
\end{proof}

However, for all programs $P$,
$(\seqset{\termset}, \lpa_P)$ is not closed under
substitutions, see Example~\ref{ex:lp-not-closed-subs}.
Hence, based on Lemma~\ref{lem:stability-rules}, we
introduce the following restricted form  of logic
programming, where one only rewrites singleton goals
using rules, the right-hand side of which is a 
singleton goal.

\begin{definition}\label{def:restricted-lp}
  For all programs $P$, we let 
  $\hra_P = \bigcup \left\{\hra_r \;\middle\vert\; r\in P \right\}$
  where, for all $r \in P$,
  $\hra_r = \left\{(u\theta, v\theta) \in \termset^2
  \;\middle\vert\; r = (u, \sequence{v}),\
  \vars(v) \subseteq \vars(u),\
  \theta \in \subsset\right\}$.
\end{definition}

We note that $\hra_r \subseteq \ra_{r,\epsilon}$.
Moreover, $\hra_r = \emptyset$ if $r = (u,\seqset{v})$
with $\vars(\seqset{v}) \not\subseteq \vars(u)$ or
$\length{\seqset{v}} \neq 1$.
Now, we have a counterpart of Lemma~\ref{lem:closed-subs-trs}
in logic programming:
\begin{lemma}\label{lem:closed-subs-lp}
  For all programs $P$, $(\termset, \hra_P)$
  is closed under substitutions.
\end{lemma}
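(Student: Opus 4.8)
The plan is to mirror the structure of the proof of Lemma~\ref{lem:closed-subs-trs}, performing an induction on the length of the word $w \in \seqset{P}$, with the single-rule case now supplied by the second item of Lemma~\ref{lem:stability-rules} rather than the first. Concretely, fix a program $P$, terms $s,t$, a word $w \in \seqset{P}$, and a substitution $\theta$, and suppose $s \hra_w t$; I want to show $s\theta \hra_w t\theta$. Note first that, by definition of $\hra_P$, the relation $\hra_w$ is a relation on $\termset$ (not on $\seqset{\termset}$), so $s,t$ are genuinely terms and the statement typechecks.

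For the base case $\length{w} = 0$, the relation $\hra_w$ is the identity, so $s = t$, hence $s\theta = t\theta$ and $s\theta \hra_w t\theta$. For the induction step, write $w = rw'$ with $r \in P$ and $\length{w'} = n$, so that $s \hra_r s' \hra_{w'} t$ for some term $s'$. From $s \hra_r s'$ and Definition~\ref{def:restricted-lp} we get $r = (u,\sequence{v})$ with $\vars(v) \subseteq \vars(u)$, and $s = u\sigma$, $s' = v\sigma$ for some substitution $\sigma$. Then $s\theta = u(\sigma\theta)$ and $s'\theta = v(\sigma\theta)$, so $s\theta \hra_r s'\theta$ directly by Definition~\ref{def:restricted-lp} (taking $\sigma\theta$ as the witnessing substitution). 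Applying the induction hypothesis to $s' \hra_{w'} t$ gives $s'\theta \hra_{w'} t\theta$, and concatenating yields $s\theta \hra_r s'\theta \hra_{w'} t\theta$, i.e. $s\theta \hra_w t\theta$.

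There is essentially no obstacle here: unlike $\lpa_P$, the relation $\hra_P$ is defined purely via instantiation (it is a sub-relation of $\ra_{r,\epsilon}$, as the text already remarks), so the awkward narrowing/unification phenomenon illustrated in Example~\ref{ex:lp-not-closed-subs} — where applying $\theta$ can destroy unifiability — simply does not arise. If one wished to avoid the explicit induction, one could instead observe that $\hra_r \subseteq \ra_{r,\epsilon}$ together with Lemma~\ref{lem:stability-rules} already gives $s\theta \hra_r s'\theta$ whenever $s \hra_r s'$ (the reduct stays in $\hra_r$ because the condition $\vars(v)\subseteq\vars(u)$ is a property of the rule, not of the substitution), and then lift this to words exactly as in the proof of Lemma~\ref{lem:closed-subs-trs}. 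Either way the argument is routine; I would present the induction-on-$\length{w}$ version for uniformity with Lemma~\ref{lem:closed-subs-trs}.
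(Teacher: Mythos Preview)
Your proposal is correct and essentially identical to the paper's own proof: both argue by induction on $\length{w}$, with the inductive step unpacking $s \hra_r s'$ via Definition~\ref{def:restricted-lp} to obtain $s = u\sigma$, $s' = v\sigma$, and then observing that $\sigma\theta$ witnesses $s\theta \hra_r s'\theta$ directly from the same definition. The only cosmetic difference is that the paper does not explicitly invoke Lemma~\ref{lem:stability-rules} here (it uses Definition~\ref{def:restricted-lp} alone), which is exactly what your main argument does as well.
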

\begin{proof}
  Let $P$ be a program,
  $s,t$ be terms, $w \in \seqset{P}$ and $\theta$
  be a substitution. Suppose that $s \hra_w t$. 
  We prove by induction on $\length{w}$ that 
  $s\theta \hra_w t\theta$.
  \begin{itemize}
    \item (Base: $\length{w} = 0$) Here, $\hra_w$ is the
    identity relation. As $s \hra_w t$, we have $s = t$,
    hence $s\theta = t\theta$, and so $s\theta \hra_w t\theta$.
    \item (Induction) Suppose that $\length{w} = n + 1$
    for some $n \in \nat$. Suppose also that for all terms
    $s',t'$, all $w' \in \seqset{P}$ with $\length{w'} = n$
    and all substitutions $\sigma$, $s' \hra_{w'} t'$ implies
    $s'\sigma \hra_{w'} t'\sigma$.
    As $\length{w} = n + 1$, we have $w = rw'$ for some
    $r \in P$ and some $w' \in \seqset{P}$ with
    $\length{w'} = n$. Therefore, $s \hra_r s' \hra_{w'} t$
    for some term $s'$. By Definition~\ref{def:restricted-lp},
    we have $r = (u, \sequence{v})$, $\vars(v) \subseteq \vars(u)$,
    $s = u\sigma$ and $s' = v\sigma$ for some substitution
    $\sigma$; moreover, $u\sigma\theta \hra_r v\sigma\theta$;
    so, $s\theta \hra_r s'\theta$. As $\length{w'} = n$,
    by induction hypothesis we have
    $s'\theta \hra_{w'} t\theta$. Hence, 
    $s\theta \hra_r s'\theta \hra_{w'} t\theta$,
    \ie $s\theta \hra_w t\theta$.
  \end{itemize}
\end{proof}

It follows from the next result that the existence
of an infinite $\hra_P$-chain implies that of
an infinite $\lpa_P$-chain, \ie non-termination
in the restricted form of logic programming
implies non-termination in full logic programming.
\begin{lemma}\label{lem:chain-restricted-lp}
  For all programs $P$, terms $s,t$ and
  $w \in \seqset{P}$, $s \hra_w t$ implies
  $\sequence{s} \lpa_w \sequence{t}$.
\end{lemma}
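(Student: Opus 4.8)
The plan is to prove this by induction on $\length{w}$, mirroring the structure of the proofs of Lemma~\ref{lem:closed-subs-trs} and Lemma~\ref{lem:closed-subs-lp}, since the claim says that a finite $\hra_w$-chain on singleton-representable terms lifts to a finite $\lpa_w$-chain on the corresponding singleton goals. The base case $\length{w}=0$ is immediate: $\hra_\epsilon$ and $\lpa_\epsilon$ are both the identity relation, so $s \hra_\epsilon t$ forces $s=t$, hence $\sequence{s} = \sequence{t}$ and $\sequence{s} \lpa_\epsilon \sequence{t}$.

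For the induction step, suppose $\length{w} = n+1$, so $w = rw'$ with $r \in P$ and $\length{w'} = n$. From $s \hra_w t$ we get an intermediate term $s'$ with $s \hra_r s' \hra_{w'} t$. By Definition~\ref{def:restricted-lp}, $r$ has the form $(u,\sequence{v})$ with $\vars(v) \subseteq \vars(u)$, and $s = u\sigma$, $s' = v\sigma$ for some substitution $\sigma$. The key single-step fact is the second bullet of Lemma~\ref{lem:stability-rules}: since $\vars(v) \subseteq \vars(u)$, we have $\sequence{u\sigma} \lpa_{(r,\sequence{1})} \sequence{v\sigma}$, i.e.\ $\sequence{s} \lpa_r \sequence{s'}$. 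Applying the induction hypothesis to $s' \hra_{w'} t$ gives $\sequence{s'} \lpa_{w'} \sequence{t}$, and composing yields $\sequence{s} \lpa_r \sequence{s'} \lpa_{w'} \sequence{t}$, i.e.\ $\sequence{s} \lpa_w \sequence{t}$.

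I do not expect a genuine obstacle here — the lemma is essentially a packaging of Lemma~\ref{lem:stability-rules}(second bullet) together with the observation that $\hra_r$ only fires when $r = (u,\sequence{v})$ with a singleton right-hand side and $\vars(v)\subseteq\vars(u)$, which is exactly the hypothesis needed to invoke that bullet. The only point requiring a moment's care is making sure that the decomposition $w = rw'$ and the induction on $\length{w}$ are set up so that the single rewrite step $\sequence{s}\lpa_r\sequence{s'}$ is taken at position $\sequence{1}$ (which it must be, since $\sequence{s}$ is a singleton goal), and that $\lpa_w$ is interpreted, as elsewhere in the paper, as the composition $\lpa_{\pi_1}\circ\cdots\circ\lpa_{\pi_n}$ of the per-rule relations rather than a constrained notion. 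With that understood, the composition in the induction step is immediate.

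\begin{proof}
  Let $P$ be a program, $s,t$ be terms, $w \in \seqset{P}$.
  We proceed by induction on $\length{w}$, proving that
  $s \hra_w t$ implies $\sequence{s} \lpa_w \sequence{t}$.
  \begin{itemize}
    \item (Base: $\length{w} = 0$) Here, $\hra_w$ and $\lpa_w$
    are both the identity relation. As $s \hra_w t$, we have
    $s = t$, hence $\sequence{s} = \sequence{t}$, and so
    $\sequence{s} \lpa_w \sequence{t}$.

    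\item (Induction) Suppose that $\length{w} = n + 1$ for
    some $n \in \nat$. Suppose also that for all terms $s',t'$
    and all $w' \in \seqset{P}$ with $\length{w'} = n$,
    $s' \hra_{w'} t'$ implies $\sequence{s'} \lpa_{w'}
    \sequence{t'}$. As $\length{w} = n + 1$, we have
    $w = rw'$ for some $r \in P$ and some $w' \in \seqset{P}$
    with $\length{w'} = n$. Therefore, $s \hra_r s' \hra_{w'} t$
    for some term $s'$. By Definition~\ref{def:restricted-lp},
    we have $r = (u,\sequence{v})$, $\vars(v) \subseteq \vars(u)$,
    $s = u\sigma$ and $s' = v\sigma$ for some substitution
    $\sigma$. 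By the second item of
    Lemma~\ref{lem:stability-rules}, applied with the
    substitution $\sigma$, we have $\sequence{u\sigma}
    \lpa_{(r,\sequence{1})} \sequence{v\sigma}$, that is,
    $\sequence{s} \lpa_r \sequence{s'}$. As $\length{w'} = n$,
    by induction hypothesis $s' \hra_{w'} t$ implies
    $\sequence{s'} \lpa_{w'} \sequence{t}$. Hence,
    $\sequence{s} \lpa_r \sequence{s'} \lpa_{w'} \sequence{t}$,
    \ie $\sequence{s} \lpa_w \sequence{t}$.
  \end{itemize}
\end{proof}
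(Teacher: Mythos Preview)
Your proof is correct and follows essentially the same approach as the paper's own proof: induction on $\length{w}$, with the single-step case handled by unfolding Definition~\ref{def:restricted-lp} and invoking the second bullet of Lemma~\ref{lem:stability-rules}. The only differences are cosmetic (you name the substitution $\sigma$ instead of $\theta$ and cite the lemma item more explicitly).
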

\begin{proof}
  Let $P$ be a program, $s,t$ be terms and
  $w \in \seqset{P}$. Suppose that $s \hra_w t$.
  We prove by induction on $\length{w}$ that
  $\sequence{s} \lpa_w \sequence{t}$.
  \begin{itemize}
    \item (Base: $\length{w} = 0$) Here, $\hra_w$ and
    $\lpa_w$ are the identity relation. As $s \hra_w t$,
    we have $s = t$, hence $\sequence{s} = \sequence{t}$,
    and so $\sequence{s} \lpa_w \sequence{t}$.
    \item (Induction) Suppose that $\length{w} = n + 1$
    for some $n \in \nat$. Suppose also that for all terms
    $s',t'$ and all $w' \in \seqset{P}$ with $\length{w'} = n$,
    $s' \hra_{w'} t'$ implies
    $\sequence{s'} \lpa_{w'} \sequence{t'}$.
    As $\length{w} = n + 1$, we have $w = rw'$ for some
    $r \in P$ and some $w' \in \seqset{P}$ with
    $\length{w'} = n$. Therefore, $s \hra_r s' \hra_{w'} t$
    for some term $s'$.
    By definition of $\hra_r$, we have
    $r = (u, \sequence{v})$, $\vars(v) \subseteq \vars(u)$,
    $s = u\theta$ and $s' = v\theta$ for some substitution
    $\theta$. So, by Lemma~\ref{lem:stability-rules}, we have 
    $\sequence{s} \lpa_r \sequence{s'}$. As $\length{w'} = n$,
    by induction hypothesis we have
    $\sequence{s'} \lpa_{w'} \sequence{t}$. Hence, 
    $\sequence{s} \lpa_r \sequence{s'} \lpa_{w'} \sequence{t}$,
    \ie $\sequence{s} \lpa_w \sequence{t}$.
  \end{itemize}
\end{proof}

\section{Loops}\label{sect:loops}
In Section~\ref{sect:intro-loops}, we have provided an
informal description of loops. Now, we propose a formal
definition in an abstract setting.

\begin{definition}\label{def:loop}
  Let $\cA = \ars$ be an ARS, $w \in \seqset{\Pi}$ and
  $\phi$ be a binary relation on $A$ which is compatible 
  with $\rra_{\Pi}$.
  A $(w,\phi)$-\emph{loop} in $\cA$ is a pair
  $(a,a') \in A^2$ such that $a \rra_w a'$ and
  $a' \in \phi(a)$.
\end{definition}

By definition of compatibility
(Definition~\ref{def:compatibility}), the existence of a
loop immediately leads to that of an infinite chain.
\begin{lemma}\label{lem:compatibility-infinite}
  Suppose that there is a $(w,\phi)$-loop $(a,a')$ in an
  ARS $\ars$. Then, there is an infinite $\rra_w$-chain
  that starts with $a \rra_w a'$.
\end{lemma}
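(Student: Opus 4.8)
The plan is to iterate the compatibility property of $\phi$ along $w$. First I would note that since $w \in \seqset{\Pi}$ and $\rra_w = (\rra_{\pi_1} \circ \cdots \circ \rra_{\pi_n})$ is a composition of relations each compatible with $\phi$ (indeed $\rra_w \subseteq \rra_\Pi$ in the sense that each factor is indexed by an element of $\Pi$, and $\phi$ is compatible with $\rra_\Pi$), a repeated application of Lemma~\ref{lem:compatibility-composition} shows that $\rra_w$ itself is compatible with $\phi$. Strictly, I would phrase this as: $\rra_{\pi_i}$ is compatible with $\phi$ for each $i$ because $\rra_{\pi_i} \subseteq \rra_\Pi$, and one checks directly from Definition~\ref{def:compatibility} that a sub-relation issue does not arise here — rather, $\rra_\Pi = \bigcup_\pi \rra_\pi$ and compatibility of the union does not automatically give compatibility of each piece, so I would instead invoke that each $\rra_\pi$ is compatible (this is what Definition~\ref{def:loop} presupposes via "$\phi$ compatible with $\rra_\Pi$") — if the paper's convention is that $\phi$ is compatible with each $\rra_\pi$, then Lemma~\ref{lem:compatibility-composition} applies directly. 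I expect the author's setup to make $\rra_w$ and $\phi$ compatible, and I would record that as the first step.

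The second and main step is to build the infinite chain by induction. Given the $(w,\phi)$-loop $(a,a')$, set $a_0 = a$ and $a_1 = a'$, so $a_0 \rra_w a_1$ and $a_1 \in \phi(a_0)$. Then, assuming $a_n \in \phi(a_{n-1})$ has been constructed together with $a_{n-1} \rra_w a_n$, I would apply compatibility of $\rra_w$ with $\phi$ to the triple $(a_{n-1}, a_n, a_n)$: from $a_n \in \phi(a_{n-1})$ and $a_{n-1} \rra_w a_n$ we obtain some $a_{n+1}$ with $a_{n+1} \in \phi(a_n)$ and $a_n \rra_w a_{n+1}$. This produces the desired infinite sequence $a_0 \rra_w a_1 \rra_w a_2 \rra_w \cdots$ starting with $a \rra_w a'$.

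The only delicate point — and the step I would flag as the main obstacle — is the very first one: making precise that $\rra_w$ is compatible with $\phi$. Lemma~\ref{lem:compatibility-composition} is stated for two relations $\rra$ and $\hookrightarrow$ both compatible with $\phi$, so an induction on $\length{w}$ is needed to extend it to an $n$-fold composition, with the base case $\length{w} = 0$ handled by the identity relation (which is trivially compatible with any $\phi$). This induction is entirely routine but should be spelled out. Everything after that is a short diagram chase, essentially the picture in Definition~\ref{def:compatibility} repeated horizontally, so the total proof is brief: one inductive lemma (or an appeal to iterating Lemma~\ref{lem:compatibility-composition}) followed by the chain-construction induction above.
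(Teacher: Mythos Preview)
Your proposal is correct and follows essentially the same route as the paper: first argue that $\rra_w$ and $\phi$ are compatible via (iterated) Lemma~\ref{lem:compatibility-composition}, then build the chain $a_0 \rra_w a_1 \rra_w a_2 \rra_w \cdots$ by repeatedly applying that compatibility, exactly as the paper does. The subtlety you flag---that compatibility of $\phi$ with the union $\rra_\Pi$ does not formally yield compatibility with each $\rra_\pi$, which is what Lemma~\ref{lem:compatibility-composition} actually needs---is real, and the paper glosses over it just as you suspected (it simply writes ``by the definition of $\rra_w$ and Lemma~\ref{lem:compatibility-composition}, $\rra_w$ and $\phi$ are compatible''); the intended reading, borne out by the concrete instances in Lemma~\ref{lem:ins-mg-compatible} where compatibility is established per rule, is indeed compatibility with each $\rra_\pi$.
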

\begin{proof}
  Let $a_0 = a$ and $a_1 = a'$.
  As $(a,a')$ is a $(w,\phi)$-loop, we have
  $a_0 \rra_w a_1$ and $a_1 \in \phi(a_0)$.
  Moreover, $\rra_{\Pi}$ and $\phi$ are compatible.
  Therefore, by the definition of $\rra_w$ and
  Lemma~\ref{lem:compatibility-composition},
  $\rra_w$ and $\phi$ are compatible. So, we have
  $a_1 \rra_w a_2$ for some $a_2 \in \phi(a_1)$. Again by
  compatibility of $\rra_w$ and $\phi$, we have $a_2 \rra_w a_3$
  for some $a_3 \in \phi(a_2)$, \emph{etc}.
\end{proof}

We note that the infinite chain $a_0 \rra_w a_1 \rra_w \cdots$
corresponding to a $(w,\phi)$-loop is such that for
all $n \in \nat$, $a_0 \rra^n_w a_n$ and $a_n \in \phi^n(a_0)$,
\ie elements that are somehow ``similar'' to $a_0$ \wrt{}
$\phi$ (\ie $a_n \in \phi^n(a_0)$) are periodically ``reached'' 
via $\rra_w$. The name \emph{loop} stems from this observation.

\begin{center}
  \begin{tikzpicture}
    \draw (0,0)   node {$\bullet$} node[below] {\small{}$a_0$} ;
    \draw (0.7,0) node {$\bullet$} node[below] {\small{}$a_1$} ;
    \draw (1.4,0) node {$\bullet$} node[below] {\small{}$a_2$} ;
    \draw (2.1,0) node {$\bullet$} node[below] {\small{}$a_3$} ;

    \draw (0.35,0) node[below] {\small{}$\rra$};
    \draw (1.05,0) node[below] {\small{}$\rra$};
    \draw (1.75,0) node[below] {\small{}$\rra$};

    \draw (0.35,-0.2) node[below] {\footnotesize{}$w$};
    \draw (1.05,-0.2) node[below] {\footnotesize{}$w$};
    \draw (1.75,-0.2) node[below] {\footnotesize{}$w$};

    \draw[->,>=latex](0,0) to[out=0, in=0] (0.35,0.75) ;
    \draw (0.35,0.75) to[out=180, in=180] (0.7,0) ; 
    \draw[->,>=latex](0.7,0) to[out=0, in=0] (1.05,0.75) ;
    \draw (1.05,0.75) to[out=180, in=180] (1.4,0) ; 
    \draw[->,>=latex](1.4,0) to[out=0, in=0] (1.75,0.75) ;
    \draw (1.75,0.75) to[out=180, in=180] (2.1,0) ; 
    
    \draw (0.35,0.75) node [above] {\small{}$\phi$} ;
    \draw (1.05,0.75) node [above] {\small{}$\phi$} ;
    \draw (1.75,0.75) node [above] {\small{}$\phi$} ;

    \draw[dashed] (2.1,0) -- (3.1,0) ;

    \draw (3.6,0) node [right]
    {\small{}$\forall n\in\nat\ \left(a_0 \rra^n_w a_n\right)
    \land \left(a_n \in \phi^n(a_0)\right)$};
  \end{tikzpicture}
\end{center}

Moreover, the chain $a_0 \rra_w a_1 \rra_w \cdots$ only
relies on a single sequence $w$ of elements of $\Pi$.
In the next section, we will consider more elaborated
chains based on two sequences. 

\begin{example}\label{ex:looping-trs}
  In Example~\ref{ex:trs-intro-2}, we have
  \[\underbrace{\fsym(x)}_{a_0} \ra_{r_1}
  \gsym(\hsym(x,\one),x) \ra_{r_2}
  \gsym(\hsym(x,\zero),x) \ra_{r_3}
  \underbrace{\gsym(\fsym^2(x),x)}_{a_3}\]
  \ie $a_0 \ra_w a_3$ for $w = \sequence{r_1, r_2, r_3}$.
  Moreover, $a_3 \in \phiinst(a_0)$ and $\ra_P$ and
  $\phiinst$ are compatible, where $P = \{r_1,r_2,r_3\}$
  (see Lemma~\ref{lem:ins-mg-compatible}).
  So, $(a_0, a_3)$ is a $(w,\phiinst)$-loop in
  $(\termset,\ra_P)$. Therefore, by
  Lemma~\ref{lem:compatibility-infinite}, there
  is an infinite $\ra_w$-chain that starts with
  $a_0 \ra_w a_3$. Indeed, we have
  \[\underbrace{\fsym(x)}_{a_0} \ra_w
  \underbrace{\gsym(\fsym^2(x),x)}_{a_3} \ra_w
  \underbrace{\gsym\big(\gsym(\fsym^3(x),\fsym(x)),x\big)}_{a_6} \ra_w
  \cdots\]
  where $a_3 \in \phiinst(a_0)$, $a_6 \in \phiinst(a_3)$, \dots{}
  We note that in this chain, the rules are applied at positions
  that vary gradually (\eg $r_1$ is applied to subterms of the form
  $\fsym(\cdots)$ that occur at deeper and deeper positions).
\end{example}

\begin{example}\label{ex:looping-lp}
  In Example~\ref{ex:lp-intro-2}, we have
  \[\underbrace{\sequence{\psym(\fsym(x,\zero))}}_{\seqset{a_0}}
  \lpa_r
  \underbrace{\sequence{\psym(x),\qsym(x)}}_{\seqset{a_1}}\]
  \ie $\seqset{a_0} \ra_w \seqset{a_1}$ for
  $w = \sequence{r}$. Moreover,
  $\seqset{a_1} \in \phimg(\seqset{a_0})$ and
  $\lpa_P$ and $\phimg$ are compatible, where $P = \{r\}$
  (see Lemma~\ref{lem:ins-mg-compatible}).
  So, $(\seqset{a_0},\seqset{a_1})$
  is a $(w,\phimg)$-loop in $(\seqset{\termset},\lpa_P)$.
  Therefore, by Lemma~\ref{lem:compatibility-infinite},
  there is an infinite $\lpa_w$-chain that starts
  with $\seqset{a_0} \lpa_w \seqset{a_1}$. Indeed, we have
  \[\underbrace{\sequence{\psym(\fsym(x,\zero))}}_{\seqset{a_0}} \lpa_w
  \underbrace{\sequence{\psym(x),\qsym(x)}}_{\seqset{a_1}} \lpa_w
  \underbrace{\sequence{\psym(x_1),\qsym(x_1),
  \psym(\fsym(x_1,\zero))}}_{\seqset{a_2}} \lpa_w
  \cdots\]
  where $\seqset{a_1} \in \phimg(\seqset{a_0})$,
  $\seqset{a_2} \in \phimg(\seqset{a_1})$, \dots
\end{example}

\section{Binary Chains}\label{sect:binary-chains}
In Section~\ref{sect:loops} we have considered infinite
chains that rely on a single sequence $w$ of elements
of $\Pi$. A natural way to extend this class
is to consider infinite chains based on two sequences
$w_1$ and $w_2$ as in Definition~\ref{def:binary-chain}
below.
In principle, one could also define many other forms
of similar chains. In this paper, we concentrate on this
form only as it covers interesting examples found in
the literature or in~\cite{tpdb} (\eg
Examples~\ref{ex:inner-loop},
\ref{ex:binary-chain-tpdb},
\ref{ex:binary-chain-zantemaG96} below)
and, moreover, we are able to provide an automatable
approach for the detection of a special case
(see Section~\ref{sect:rec-pairs}); this case
allows one for instance to encode the semantics of
vector addition systems~\cite{karpM69}.

\begin{definition}\label{def:binary-chain}
    A \emph{binary chain} in an ARS $\ars$ is an infinite
    $(\rra^*_{w_1} \circ \rra_{w_2})$-chain for some
    $w_1,w_2 \in \seqset{\Pi}$.
\end{definition}

Of course, as $\rra_w = (\rra^0_w \circ \rra_w)$,
the infinite chain $a_0 \rra_w a_1 \rra_w \cdots$
corresponding to a $(w,\phi)$-loop is also binary.
Moreover, any infinite chain of the form
$a_0 \mathop{(\rra^*_{w_1} \circ \rra^*_{w_2})} a_1
\mathop{(\rra^*_{w_1} \circ \rra^*_{w_2})} \cdots$
is binary, because any sequence 
$a_i \rra_{w_2} a'_i \rra_{w_2} a''_i$ has the form
$a_i \rra^0_{w_1} a_i \rra_{w_2} a'_i \rra^0_{w_1}
a'_i \rra_{w_2} a''_i$.

\begin{example}[\cite{dershowitz87,geserZ99,wangS06}]
    \label{ex:inner-loop}
    Consider the rules
    \[r_1 = \big(\bsym(\csym), \sequence{\dsym(\csym)}\big) \qquad
    r_2 = \big(\bsym(\dsym(x)), \sequence{\dsym(\bsym(x))}\big) \qquad
    r_3 = \big(\asym(\dsym(x)), \sequence{\asym(\bsym^2(x))}\big)\]
    For the sake of readability, let us omit parentheses
    and write for instance $\symb{adc}$ instead of
    $\asym(\dsym(\csym))$. Let $w_1 = \sequence{r_2}$ and
    $w_2 = \sequence{r_3,r_1}$. We have the infinite
    $(\ra^*_{w_1} \circ \ra_{w_2})$-chain
    \begin{align*}
        \symb{adc}
        & \ra^0_{r_2} \symb{adc} 
        \ra_{r_3} \symb{ab}^2\csym
        \ra_{r_1} \symb{abdc} \\
        & \ra^1_{r_2} \symb{adbc}
        \ra_{r_3} \symb{ab}^3\csym
        \ra_{r_1} \symb{ab}^2\symb{dc} \\
        & \ra^2_{r_2} \symb{adb}^2\csym
        \ra_{r_3} \symb{ab}^4\csym
        \ra_{r_1} \symb{ab}^3\symb{dc} \\
        & \ra^3_{r_2} \cdots
    \end{align*}
    We note that in this chain, the rule $r_3$
    is always applied at the root position but that 
    $r_1$ and $r_2$ are applied at positions that vary.
\end{example}

\begin{example}[\texttt{TRS\_Standard/Zantema\_15/ex11.xml} in~\cite{tpdb}]
  \label{ex:binary-chain-tpdb}
  Consider the rules
  \[r_1 = \big(\fsym(x, \ssym(y)), \sequence{\fsym(\ssym(x), y)}\big)
  \qquad
  r_2 = \big(\fsym(x, \zero), \sequence{\fsym(\ssym(\zero), x)}\big)\]
  Let $w_1 = \sequence{r_1}$ and $w_2 = \sequence{r_2}$.
  We have the infinite $(\hra^*_{w_1} \circ \hra_{w_2})$-chain
  \begin{align*}
    \fsym(\ssym(\zero), \zero)
     & \hra^0_{r_1}
     \fsym(\ssym(\zero), \zero)
     \hra_{r_2}
     \fsym(\ssym(\zero), \ssym(\zero)) \\
     & \hra^1_{r_1}
     \fsym(\ssym^2(\zero), \zero)
     \hra_{r_2}
     \fsym(\ssym(\zero), \ssym^2(\zero)) \\
     & \hra^2_{r_1}
     \fsym(\ssym^3(\zero), \zero)
     \hra_{r_2}
     \fsym(\ssym(\zero), \ssym^3(\zero)) \\
     & \hra^3_{r_1} \cdots
  \end{align*}
  For instance, $u_1 = \fsym(\ssym(\zero), \ssym(\zero))$
  and $u_2 = \fsym(\ssym^2(\zero), \zero)$ correspond to
  the integer vectors $v_1 = (1,1)$ and $v_2 = (2,0)$,
  respectively, and the chain $u_1 \hra_{r_1} u_2$ models
  the componentwise addition of the vector $(1,-1)$ to
  $u_1$.
  The programs \verb+ex12.xml+ and \verb+ex14.xml+
  in the same directory of~\cite{tpdb} are similar.
\end{example}

\begin{example}[\cite{zantemaG96}]
  \label{ex:binary-chain-zantemaG96}
  Consider the rules
  \[r_1 = \big(\fsym(\csym, \asym(x), y), \sequence{\fsym(\csym, x, \asym(y))}\big)
  \qquad
  r_2 = \big(\fsym(\csym, \asym(x), y), \sequence{\fsym(x, y, \asym^2(\csym))}\big)\]
  Let $w_1 = \sequence{r_1}$ and $w_2 = \sequence{r_2}$.
  We have the infinite $(\hra^*_{w_1} \circ \hra_{w_2})$-chain
  \begin{align*}
    \fsym(\csym, \asym(\csym), \asym^2(\csym))
    & \hra^0_{r_1}
    \fsym(\csym, \asym(\csym), \asym^2(\csym))
    \hra_{r_2}
    \fsym(\csym, \asym^2(\csym), \asym^2(\csym)) \\
    & \hra^1_{r_1}
    \fsym(\csym, \asym(\csym), \asym^3(\csym))
    \hra_{r_2}
    \fsym(\csym, \asym^3(\csym), \asym^2(\csym)) \\
    & \hra^2_{r_1}
    \fsym(\csym, \asym(\csym), \asym^4(\csym))
    \hra_{r_2}
    \fsym(\csym, \asym^4(\csym), \asym^2(\csym)) \\
    & \hra^3_{r_1} \cdots
  \end{align*}
\end{example}

\subsection{Recurrent Pairs}
\label{sect:rec-pairs}
Now, we present a new criterion for the detection of binary
chains. It is based on two specific chains
$u_1 \rra^+_{\Pi} v_1$ and $u_2 \rra^+_{\Pi} v_2$ such that
a context is removed from $u_1$ to $v_1$ while it is added
again from $u_2$ to $v_2$. This is formalised as follows.
In the next definition, we consider a new hole symbol
$\square'$ that does not occur in
$\Sigma \cup X \cup \{\square\}$ and we let $c_1$ be a
context with at least one occurrence of $\square$ and
$\square'$; moreover, for all terms $t,t'$, we let $c_1[t,t']$
be the term obtained from $c_1$ by replacing the occurrences of
$\square$ (resp. $\square'$) by $t$ (resp. $t'$). On the other
hand, we let $c_2$ be a context with occurrences of $\square$
only (as in Definition~\ref{def:term-context}).

\begin{definition}\label{def:rec-pair}
  Let $\cA = (\termset,\rra_{\Pi})$ be an ARS
  closed under substitutions.
  A \emph{recurrent pair in $\cA$} is a pair
  $(u_1 \rra_{w_1} v_1, u_2 \rra_{w_2} v_2)$
  of finite chains in $\cA$ such that 
  \begin{itemize}
    \item $u_1 = c_1[x, c_2[y]]$, $v_1 = c_1[c_2^{n_1}[x], y]$,
    $u_2 = c_1[x, c_2^{n_2}[s]]$ and $v_2 = c_1[c_2^{n_3}[t], c_2^{n_4}[x]]$
    \item $x \neq y$ and $\{x,y\} \cap \vars(c_1) = \emptyset$
    \item $\vars(c_2) = \vars(s) = \emptyset$
    \item $t \in \{x,s\}$
    \item $n_4 \geq n_2$
  \end{itemize}
\end{definition}

The ARS $\left(\termset,\ra_{\{r_1,r_2,r_3\}}\right)$ of
Example~\ref{ex:inner-loop} is not covered by this definition
because it only involves function symbols of arity 0 or 1
(hence, one cannot find a context $c_1$ with at least one
occurrence of $\square$ and $\square'$).

\begin{example}\label{ex:trs-binary-intro-2}
  In Example~\ref{ex:trs-binary-intro}, we have the chains
  $\fsym(x, c[y], x) \ra_{w_1} \fsym(c[x], y, c[x])$
  and
  $\fsym(x, \zero, x) \ra_{w_2} \fsym(c[x], c[x], c[x])$.
  As $\big(\termset,\ra_{\{r_1,r_2,r_3,r_4\}}\big)$ is
  closed under substitutions (Lemma~\ref{lem:closed-subs-trs}),
  these chains form a recurrent pair, with
  $c_1 = \fsym(\square,\square',\square)$,
  $c_2 = c = \gsym(\square,\zero,\square)$,
  $(n_1,n_2,n_3,n_4) = (1,0,1,1)$, $s = \zero$ and $t = x$. 
\end{example}

\begin{example}\label{ex:binary-chain-tpdb-2}
  In Example~\ref{ex:binary-chain-tpdb}, we have the chains
  $\fsym(x, \ssym(y)) \hra_{w_1} \fsym(\ssym(x), y)$
  and
  $\fsym(x, \zero) \hra_{w_2} \fsym(\ssym(\zero), x)$.
  As $\big(\termset,\hra_{\{r_1,r_2\}}\big)$ is closed
  under substitutions (Lemma~\ref{lem:closed-subs-lp}),
  these chains form a recurrent pair, with
  $c_1 = \fsym(\square,\square')$,
  $c_2 = \ssym(\square)$, $(n_1,n_2,n_3,n_4) = (1,0,1,0)$
  and $s = t = \zero$.
\end{example}

\begin{example}\label{ex:binary-chain-zantemaG96-2}
  In Example~\ref{ex:binary-chain-zantemaG96}, we have
  the chains 
  $\fsym(\csym, \asym(x), y) \hra_{w_1}
  \fsym(\csym, x, \asym(y))$
  and
  $\fsym(\csym, \asym(\csym), y) \hra_{w_2}
  \fsym(\csym, y, \asym^2(\csym))$.
  As $\big(\termset,\hra_{\{r_1,r_2\}}\big)$ is closed
  under substitutions (Lemma~\ref{lem:closed-subs-lp}),
  these chains form a recurrent pair, with
  $c_1 = \fsym(\csym,\square',\square)$,
  $c_2 = \asym(\square)$, $(n_1,n_2,n_3,n_4) = (1,0,1,0)$
  and $s = t = \asym(\csym)$.
\end{example}

It is proved in~\cite{payet23} that the existence of a recurrent
pair of a more restricted form (\ie $\length{w_1} = \length{w_2} = 1$,
$c_1 = \fsym(\square,\square')$ and $n_2 = 0$, as in
Example~\ref{ex:binary-chain-tpdb}) leads to that of a binary
chain. The generalisation to any sequences $w_1,w_2$, any
context $c_1$ and any $n_2 \in \nat$ satisfying the constraints
above is presented below (Corollary~\ref{coro:w1-w2}).
In some special situations, the obtained binary chain actually
relies on a single sequence (\eg if $n_2 = n_3 = n_4 = 0$ then
we have  $c_1[s,s] \ra_{w_2} c_1[s,s] \ra_{w_2} \cdots$), but
this is not always the case (see the examples above).
The next statements are parametric in an ARS $\cA$ closed
under substitutions and a recurrent pair in $\cA$, with
the notations of Definition~\ref{def:rec-pair} as well as
this new one (introduced for the sake of readability):
\begin{definition}
  For all $m, n \in \nat$, we let $c_1[m, n]$ denote
  the term $c_1[c_2^m[s],c_2^n[s]]$. 
\end{definition}

Then, we have the following two results.
Lemma~\ref{lem:reductions-w1} states that $w_1$
allows one to iteratively move a tower of $c_2$'s
from the positions of $\square'$ to those of $\square$
in $c_1$. Conversely, Lemma~\ref{lem:reductions-w2}
states that $w_2$ allows one to copy a tower of $c_2$'s
from the positions of $\square$ to those of $\square'$ in
just one application of $\rra_{w_2}$.
\begin{lemma}\label{lem:reductions-w1}
  For all $m,n \in \nat$,
  $c_1[m, n+1] \rra_{w_1} c_1[m + n_1, n]$.
  Consequently, for all $m,n \in \nat$ with $n \geq n_2$, we have
  $c_1[m, n] \rra^{n-n_2}_{w_1} c_1[m + (n-n_2) \times n_1, n_2]$.
\end{lemma}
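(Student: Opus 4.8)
The plan is to obtain the single-step statement $c_1[m,n+1] \rra_{w_1} c_1[m+n_1,n]$ directly from the chain $u_1 \rra_{w_1} v_1$ of the recurrent pair by applying a well-chosen substitution, and then to derive the iterated statement by a short induction on the number of $w_1$-steps.

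For the single-step statement, I would fix $m,n \in \nat$ and consider the substitution $\theta = \{x \mapsto c_2^m[s],\ y \mapsto c_2^n[s]\}$ (this is well defined because $x \neq y$). Since $\{x,y\} \cap \vars(c_1) = \emptyset$ and $\vars(c_2) = \vars(s) = \emptyset$, applying $\theta$ commutes with filling the holes $\square,\square'$ of $c_1$ and the hole of $c_2$; combined with the identity $c_2^{a}[c_2^{b}[t]] = c_2^{a+b}[t]$ — a straightforward induction on $a$ from $c_2^{0}=\square$ and $c_2^{a+1}=c_2[c_2^{a}]$ — this gives $u_1\theta = c_1[x,c_2[y]]\theta = c_1[c_2^m[s],\,c_2^{n+1}[s]] = c_1[m,n+1]$ and $v_1\theta = c_1[c_2^{n_1}[x],y]\theta = c_1[c_2^{m+n_1}[s],\,c_2^{n}[s]] = c_1[m+n_1,n]$. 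Because $\cA$ is closed under substitutions, $u_1 \rra_{w_1} v_1$ entails $u_1\theta \rra_{w_1} v_1\theta$, \ie $c_1[m,n+1] \rra_{w_1} c_1[m+n_1,n]$.

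For the iterated statement, I would fix $m$ and induct on $k = n - n_2 \in \nat$. When $k=0$ there is nothing to prove, since $\rra^{0}_{w_1}$ is the identity relation and $m + 0 \times n_1 = m$. When $k = k'+1$, the single-step statement (with $n$ replaced by $n_2+k'$) gives $c_1[m,n] \rra_{w_1} c_1[m+n_1,\,n_2+k']$, and the induction hypothesis (applied with $m$ replaced by $m+n_1$, whose exponent is then $k'$) gives $c_1[m+n_1,\,n_2+k'] \rra^{k'}_{w_1} c_1[m + (k'+1)\times n_1,\, n_2]$; composing these two reductions yields $c_1[m,n] \rra^{k'+1}_{w_1} c_1[m + (n-n_2)\times n_1,\, n_2]$, which closes the induction.

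The only point that needs care is the instantiation step: one must check that applying $\theta$ really does commute with hole-filling for both $c_1$ and $c_2$, and that the towers of $c_2$ combine with the correct exponents. This is entirely routine given the disjointness and groundness conditions of Definition~\ref{def:rec-pair}, and I do not anticipate any genuine obstacle. Note that the remaining conditions of that definition, namely $t \in \{x,s\}$ and $n_4 \geq n_2$, play no role here; they will be needed for Lemma~\ref{lem:reductions-w2}.
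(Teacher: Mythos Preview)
Your proposal is correct and follows essentially the same approach as the paper: the same substitution $\theta=\{x\mapsto c_2^m[s],\ y\mapsto c_2^n[s]\}$ combined with closure under substitutions for the single step, followed by the same induction (on $n-n_2$) that chains one $w_1$-step with the induction hypothesis applied at $m+n_1$. The only minor presentational slip is that you write ``fix $m$'' and then invoke the induction hypothesis with $m$ replaced by $m+n_1$; the paper avoids this by stating the induction hypothesis universally in $m$, which is also what your argument actually needs.
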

\begin{proof}
  Let $m,n \in \nat$. Then,
  $c_1[m, n+1] = c_1[c_2^m[s], c_2^{n+1}[s]] = u_1\theta$
  where $\theta = \{x \mapsto c_2^m[s], y \mapsto c_2^n[s]\}$.
  So, as $u_1 \rra_{w_1} v_1$ and $\cA$ is closed under
  substitutions, we have
  $c_1[m, n+1] \rra_{w_1} v_1\theta$ where
  $v_1\theta = c_1[c_2^{m + n_1}[s],c_2^n[s]] = c_1[m + n_1, n]$.

  Now, we prove the second part of the lemma by induction on $n$.
  \begin{itemize}
    \item (Base: $n = n_2$) Here, $\rra^{n-n_2}_{w_1}$ is the
    identity relation. Hence, for all $m \in \nat$, we have
    $c_1[m, n] \rra^{n-n_2}_{w_1} c_1[m, n]$,
    where $c_1[m, n] = c_1[m + (n-n_2) \times n_1, n_2]$.
    \item (Induction) Suppose that for some $n \geq n_2$
    we have $c_1[m, n] \rra^{n-n_2}_{w_1} c_1[m + (n-n_2) \times n_1, n_2]$
    for all $m \in \nat$. Let $m \in \nat$. By the first part
    of the lemma, $c_1[m, n+1] \rra_{w_1} c_1[m + n_1, n]$.
    Moreover, by induction hypothesis,
    $c_1[m + n_1, n] \rra^{n-n_2}_{w_1}
    c_1[(m + n_1) + (n-n_2) \times n_1, n_2]$, \ie
    $c_1[m + n_1, n] \rra^{n-n_2}_{w_1}
    c_1[m + (n + 1 - n_2) \times n_1, n_2]$.
    Consequently, finally we have
    $c_1[m, n+1] \rra^{n+1 - n_2}_{w_1}
    c_1[m + (n + 1 - n_2) \times n_1, n_2]$.
  \end{itemize}
\end{proof}

\begin{lemma}\label{lem:reductions-w2}
  For all $m \in \nat$, we have
  $c_1[m, n_2] \rra_{w_2} c_1[m' + n_3, m + n_4]$
  where $m' = 0$ if $t = s$ and $m' = m$ if $t = x$.
\end{lemma}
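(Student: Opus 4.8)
The plan is to obtain $c_1[m, n_2] \rra_{w_2} c_1[m' + n_3, m + n_4]$ directly as a substitution instance of the given chain $u_2 \rra_{w_2} v_2$, using that $\cA$ is closed under substitutions. Fix $m \in \nat$ and let $\theta = \{x \mapsto c_2^m[s]\}$. First I would check that $u_2\theta = c_1[m, n_2]$: since $u_2 = c_1[x, c_2^{n_2}[s]]$, $x \notin \vars(c_1)$ and $\vars(c_2) = \vars(s) = \emptyset$, applying $\theta$ merely replaces the occurrences of $x$ filling the $\square$-holes of $c_1$ by $c_2^m[s]$ and leaves $c_1$ and the ground subterm $c_2^{n_2}[s]$ unchanged; hence $u_2\theta = c_1[c_2^m[s], c_2^{n_2}[s]] = c_1[m, n_2]$. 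Then, from $u_2 \rra_{w_2} v_2$ and closure under substitutions, $u_2\theta \rra_{w_2} v_2\theta$, \ie $c_1[m, n_2] \rra_{w_2} v_2\theta$.

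It then remains to identify $v_2\theta$. For the same reasons as above, applying $\theta$ to $v_2 = c_1[c_2^{n_3}[t], c_2^{n_4}[x]]$ yields $v_2\theta = c_1[c_2^{n_3}[t\theta], c_2^{n_4}[c_2^m[s]]]$, and the elementary identity $c^a[c^b[t']] = c^{a+b}[t']$ for contexts (an immediate induction on $a$ from $c^0 = \square$ and $c^{a+1} = c[c^a]$) gives $c_2^{n_4}[c_2^m[s]] = c_2^{m+n_4}[s]$. For the left component I would split on the constraint $t \in \{x, s\}$ (the two cases are mutually exclusive, since $s$ is ground while $x$ is a variable): if $t = s$ then $t\theta = s$, so $v_2\theta = c_1[c_2^{n_3}[s], c_2^{m+n_4}[s]] = c_1[n_3, m+n_4]$; if $t = x$ then $t\theta = c_2^m[s]$, so $c_2^{n_3}[t\theta] = c_2^{m+n_3}[s]$ and $v_2\theta = c_1[c_2^{m+n_3}[s], c_2^{m+n_4}[s]] = c_1[m+n_3, m+n_4]$. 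In both cases $v_2\theta = c_1[m'+n_3, m+n_4]$ with $m'$ as in the statement, which concludes the argument.

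There is no genuine obstacle here: the entire content is the reduction to closure under substitutions, and the only care needed is notational — ensuring $\theta$ does not disturb the ground subterms, using the context-power composition identity correctly, and tracking the case split on $t$. I note that the hypothesis $n_4 \geq n_2$ plays no role in this lemma; it will only be needed afterwards, when Lemma~\ref{lem:reductions-w1} is applied to the term $c_1[m'+n_3, m+n_4]$ produced here, so that its second index $m + n_4$ is $\geq n_2$.
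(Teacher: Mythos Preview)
Your proof is correct and follows essentially the same approach as the paper: define $\theta = \{x \mapsto c_2^m[s]\}$, observe that $u_2\theta = c_1[m,n_2]$, invoke closure under substitutions to get $c_1[m,n_2] \rra_{w_2} v_2\theta$, and then compute $v_2\theta$ by the case split on $t \in \{s,x\}$. Your version is slightly more detailed (spelling out why $\theta$ leaves $c_1$ and the ground parts unchanged, and noting that $n_4 \geq n_2$ is unused here), but the argument is the same.
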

\begin{proof}
  Let $m \in \nat$. We have
  $c_1[m, n_2] = c_1[c_2^m[s], c_2^{n_2}[s]] = u_2\{x \mapsto c_2^m[s]\}$.
  Hence, as $u_2 \rra_{w_2} v_2$ and $\cA$ is closed under
  substitutions, we have
  $c_1[m, n_2] \rra_{w_2} v_2\{x \mapsto c_2^m[s]\}$.
  \begin{itemize}
    \item If $t = s$ then $v_2\{x \mapsto c_2^m[s]\} =
    c_1[c_2^{n_3}[s],c_2^{m+n_4}[s]] = c_1[n_3, m+n_4]$.
    \item If $t = x$ then $v_2\{x \mapsto c_2^m[s]\} =
    c_1[c_2^{m+n_3}[s],c_2^{m+n_4}[s]] = c_1[m+n_3, m+n_4]$.
  \end{itemize}
\end{proof}

By combining Lemmas~\ref{lem:reductions-w1}
and~\ref{lem:reductions-w2}, one gets:
\begin{proposition}\label{prop:w1-w2}
  For all $m,n \in \nat$ with $n \geq n_2$,
  there exist $m',n' \in \nat$ such that $n' \geq n_2$ and 
  $c_1[m, n] \mathop{(\rra^*_{w_1} \circ \rra_{w_2})}
  c_1[m', n']$.
\end{proposition}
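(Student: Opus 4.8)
The plan is to chain together Lemmas~\ref{lem:reductions-w1} and~\ref{lem:reductions-w2}, which have already done all the real work. Fix $m,n \in \nat$ with $n \geq n_2$. First I would apply the second part of Lemma~\ref{lem:reductions-w1} to move the whole tower of $c_2$'s from the $\square'$-positions to the $\square$-positions of $c_1$: this gives $c_1[m, n] \rra^{n-n_2}_{w_1} c_1[M, n_2]$, hence in particular $c_1[m, n] \rra^*_{w_1} c_1[M, n_2]$, where $M = m + (n-n_2) \times n_1 \in \nat$.

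Next I would instantiate Lemma~\ref{lem:reductions-w2} at the term $c_1[M, n_2]$ (i.e.\ with $M$ in the role of the lemma's $m$). This yields $c_1[M, n_2] \rra_{w_2} c_1[m'' + n_3, M + n_4]$, where $m'' = 0$ if $t = s$ and $m'' = M$ if $t = x$; in either case $m'' \in \nat$. Composing the two reductions gives $c_1[m, n] \mathop{(\rra^*_{w_1} \circ \rra_{w_2})} c_1[m', n']$ with $m' = m'' + n_3 \in \nat$ and $n' = M + n_4 \in \nat$.

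It then remains to verify the side condition $n' \geq n_2$. Here I would invoke the constraint $n_4 \geq n_2$ from Definition~\ref{def:rec-pair}: since $M \geq 0$, we get $n' = M + n_4 \geq n_4 \geq n_2$, which is exactly what is needed. Taking the witnesses $m'$ and $n'$ just constructed closes the argument.

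I do not expect any genuine obstacle: everything hinges on the two preceding lemmas, and the only point requiring a little care is that the intermediate exponent $n_2$ is precisely the value at which Lemma~\ref{lem:reductions-w1} leaves the $\square'$-tower and at which Lemma~\ref{lem:reductions-w2} becomes applicable, so the two statements dovetail with no gap. The hard part — proving Lemmas~\ref{lem:reductions-w1} and~\ref{lem:reductions-w2} via closure under substitutions — is already behind us, so this proposition is essentially a bookkeeping step.
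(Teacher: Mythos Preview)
Your proposal is correct and follows essentially the same route as the paper's own proof: apply Lemma~\ref{lem:reductions-w1} to reach $c_1[M,n_2]$ with $M = m + (n-n_2)\times n_1$, then apply Lemma~\ref{lem:reductions-w2} to obtain $c_1[m',n']$, and finally use the constraint $n_4 \geq n_2$ from Definition~\ref{def:rec-pair} to check $n' \geq n_2$. The only differences are cosmetic (variable names $M,m''$ versus the paper's $l,l'$, and your explicit remark that $\rra_{w_1}^{n-n_2} \subseteq \rra_{w_1}^*$).
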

\begin{proof}
  Let $m,n \in \nat$ with $n \geq n_2$.
  By Lemma~\ref{lem:reductions-w1}, 
  $c_1[m, n] \rra^{n-n_2}_{w_1} c_1[l, n_2]$
  where $l = m + (n-n_2) \times n_1$.
  Moreover, by Lemma~\ref{lem:reductions-w2}, 
  $c_1[l, n_2] \rra_{w_2} c_1[l' + n_3, l + n_4]$
  where $l' = 0$ if $t = s$ and $l' = l$
  if $t = x$. Therefore, for
  $m' = l' + n_3$ and $n' = l + n_4$, we
  have $c_1[m, n] \mathop{(\rra^{n-n_2}_{w_1} \circ \rra_{w_2})}
  c_1[m', n']$ and we note that $n' \geq n_2$ because,
  by Definition~\ref{def:rec-pair}, $n_4 \geq n_2$.
\end{proof}

The next result is a straighforward consequence of
Proposition~\ref{prop:w1-w2}.
\begin{corollary}\label{coro:w1-w2}
  For all $m,n \in \nat$ with $n \geq n_2$,
  $c_1[m, n]$ starts an infinite
  $\mathop{(\rra^*_{w_1} \circ \rra_{w_2})}$-chain.
\end{corollary}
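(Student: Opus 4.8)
The plan is to obtain the infinite chain by iterating Proposition~\ref{prop:w1-w2}, which already does all the real work. Fix $m,n\in\nat$ with $n\geq n_2$. I would build, by recursion on $i\in\nat$, a sequence $a_0,a_1,a_2,\dots$ of terms of the form $a_i=c_1[m_i,n_i]$ with $n_i\geq n_2$, such that $a_i \mathop{(\rra^*_{w_1}\circ\rra_{w_2})} a_{i+1}$ for every $i$. The base case sets $a_0=c_1[m,n]$, i.e.\ $(m_0,n_0)=(m,n)$, which satisfies $n_0\geq n_2$ by hypothesis. For the recursion step, assuming $a_i=c_1[m_i,n_i]$ has been constructed with $n_i\geq n_2$, Proposition~\ref{prop:w1-w2} yields $m_{i+1},n_{i+1}\in\nat$ with $n_{i+1}\geq n_2$ and $c_1[m_i,n_i]\mathop{(\rra^*_{w_1}\circ\rra_{w_2})}c_1[m_{i+1},n_{i+1}]$; I then put $a_{i+1}=c_1[m_{i+1},n_{i+1}]$, which again meets the invariant.

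The key steps, in order, are therefore: (i) record the invariant ``$a_i$ has the form $c_1[\cdot,\cdot]$ with second index $\geq n_2$'' as the thing propagated along the recursion (not merely ``$a_i$ starts such a chain''); (ii) verify the base case directly from the assumption $n\geq n_2$; (iii) invoke Proposition~\ref{prop:w1-w2} for the step, noting that its hypothesis ($n_i\geq n_2$) is exactly what the invariant supplies and that its conclusion re-establishes the invariant ($n_{i+1}\geq n_2$); (iv) conclude that $a_0,a_1,a_2,\dots$ is, by Definition~\ref{def:chain}, an infinite $\mathop{(\rra^*_{w_1}\circ\rra_{w_2})}$-chain starting from $c_1[m,n]$. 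If one wishes to avoid an explicit appeal to dependent choice, one can phrase (iii) as: Proposition~\ref{prop:w1-w2} defines a total map on the set $\{c_1[m,n]\mid n\geq n_2\}$ sending each element to a successor reachable via $\rra^*_{w_1}\circ\rra_{w_2}$, and the chain is just the forward orbit of $c_1[m,n]$ under this map.

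There is essentially no obstacle at this stage: the corollary is a one-line consequence of the proposition, and the genuine content lives in Lemmas~\ref{lem:reductions-w1} and~\ref{lem:reductions-w2} (and in the closure-under-substitutions hypothesis that lets one instantiate the two chains of the recurrent pair). The only point that requires a moment's care is the one already flagged in step~(i) — making sure the induction carries the structural invariant $n_i\geq n_2$ forward, since without it Proposition~\ref{prop:w1-w2} could not be re-applied; but this is guaranteed by the condition $n_4\geq n_2$ in Definition~\ref{def:rec-pair}, which is precisely why that condition appears in the conclusion of Proposition~\ref{prop:w1-w2}.
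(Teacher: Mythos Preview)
Your proposal is correct and matches the paper's approach exactly: the paper simply states that the corollary is ``a straightforward consequence of Proposition~\ref{prop:w1-w2}'' without spelling out the iteration, and what you have written is precisely the obvious unfolding of that remark. Your care about carrying the invariant $n_i\geq n_2$ forward is the right point to flag, even though the paper leaves it implicit.
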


\begin{example}[Example~\ref{ex:trs-binary-intro-2} continued]
  An infinite $\mathop{(\ra^*_{w_1} \circ \ra_{w_2})}$-chain that
  starts from $\fsym(c[\zero],c[\zero],c[\zero]) = c_1[1,1]$ is
  provided in Example~\ref{ex:trs-binary-intro}.
\end{example}

\begin{example}[Example~\ref{ex:binary-chain-tpdb-2}
  (resp~\ref{ex:binary-chain-zantemaG96-2}) continued]
  An infinite $\mathop{(\hra^*_{w_1} \circ \hra_{w_2})}$-chain that
  starts from $\fsym(\ssym(\zero),\zero) = c_1[1,0]$
  (resp. $\fsym(\csym, \asym(\csym), \asym^2(\csym)) = c_1[1,0]$)
  is provided in Example~\ref{ex:binary-chain-tpdb}
  (resp.~\ref{ex:binary-chain-zantemaG96}).
\end{example}

\section{Experimental Evaluation}\label{sect:implementation}

\begin{table}[h]
  \centering
  \caption{Termination~Competition~2023 -- Category \emph{TRS Standard}.}
  \label{table:trs}
  \begin{tabular}{|c||l@{\;}l|l@{\;}l|r@{\;}l|c|r@{\;}l|r@{\;}l|}
    \hline
    Participant &
    \multicolumn{2}{c|}{\textsf{NO}'s} &
    \multicolumn{2}{c|}{\% VBS} & 
    \multicolumn{2}{c|}{(E12)} &
    (P21) &
    \multicolumn{2}{c|}{(P23)} &
    \multicolumn{2}{c|}{(Z15)} \\
    \hline \hline
    \aprove & 278 & (22) & 81.8 & (6.5) & 35 & (17) & 0 & 0 & & 2 & \\
    \hline
    \autonon & 233 & (20) & 68.5 & (5.9) & 19 & (3) & 3 & 0 & & 14 & (9) \\
    \hline
    \mnm & 253 & (9) & 74.4 & (2.6) & 0 & & 0 & 0 & & 0 & \\
    \hline
    \muterm & 136 & & 40.0 & & 0 & & 0 & 0 & & 0 & \\
    \hline
    \natt   & 170 & & 50.0 & & 0 & & 0 & 0 & & 0 & \\
    \hline
    \nti    & 263 & (10) & 77.4 & (2.9) & 5 & & 3 & 10 & (10) & 3 & \\
    \hline
    \tttt   & 194 & & 57.1 & & 0 & & 0 & 0 & & 0 & \\
    \hline
  \end{tabular}
\end{table}

In~\cite{payet08,payet18,payetM06}, we have provided
details and experimental evaluations with numbers 
about the loop detection approach of \nti. Since then,
the tool has evolved, as we have fixed several bugs and
implemented the syntactic criterion of
Section~\ref{sect:rec-pairs} for detecting binary chains.
Hence, we present an updated evaluation based on the results
of \nti{} at the
Termination~Competition~2023~\cite{termcomp}.
\nti{} participated in two categories.

\begin{itemize}
  \item Category \emph{Logic Programming} (315~LPs,
  2~participants: \aprove~\cite{aprove,aproveWeb} and \nti).
  Here, \nti{} was the only tool capable of detecting
  non-termination. In total, it found 58~non-terminating LPs;
  45~of them were detected from a loop and the other~13
  from a recurrent pair.
  \item Category \emph{TRS Standard} (1523~TRSs, 7~participants:
  \aprove, \autonon~\cite{endrullisZ15}, \mnm~\cite{mnm},
  \muterm~\cite{mutermWeb}, \natt~\cite{nattWeb}, \nti,
  \tttt~\cite{tttWeb}).
  We report some results in Table~\ref{table:trs}.
  For each participant, column ``\textsf{NO}'s'' reports the
  number of \textsf{NO} answers, \ie of TRSs proved
  non-terminating (with, in parentheses, the number of these
  TRSs detected by the participant only). 
  The \emph{Virtual Best Solver (VBS)} collects the best
  consistent claim for each benchmark since at least 2018;
  column ``\% VBS'' reports the percentage of \textsf{NO}'s
  of each participant \wrt{} the 340 \textsf{NO}'s that the
  VBS collected in total. Then, the following columns report
  the number of \textsf{NO}'s for 4~particular directories
  of~\cite{tpdb}: (E12) corresponds
  to \texttt{TRS\_Standard/EEG\_IJCAR\_12} (49~TRSs),
  (P21) to \texttt{TRS\_Standard/payet\_21} (3~TRSs),
  (P23) to \texttt{TRS\_Standard/Payet\_23} (10~TRSs)
  and (Z15) to \texttt{TRS\_Standard/Zantema\_15} (16~TRSs).
  Globally, the number of non-terminating TRSs detected by
  \nti{} is the second highest, just behind \aprove.
  
  As far as we know, \mnm, \muterm, \natt{} and \tttt{} only
  detect loops. We note that the number of loops found by \mnm{}
  is the highest among these 4~participants.
  It is even very likely that it is the highest among all the
  participants but we could not verify this precisely.
  
  On the other hand, \aprove, \autonon{} and \nti{} are able
  to detect loops as well as other forms of non-termination
  (usually called \emph{non-loops}).
  In order to detect non-loops, \nti{}
  tries to find recurrent pairs while \aprove{} implements
  the approach of~\cite{emmesEG12,oppelt08} and \autonon{}
  that of~\cite{endrullisZ15} (see Section~\ref{sect:rel-work}
  below). To illustrate their technique, the authors of these
  tools submitted several TRSs admitting no loop to the
  organizers of the competition (see directory (E12) for
  \aprove{}, (P21)+(P23) for \nti{} and (Z15) for \autonon). 
  The 10~non-terminating TRSs that \nti{} is the only one
  to detect are those of (P23). Moreover, we note that \nti{}
  finds 3 recurrent pairs in (Z15),
  see Example~\ref{ex:binary-chain-tpdb}, and none in (E12)%
  \footnote{Actually, \nti{} succeeds on 5~TRSs of (E12)
  because it implements a simplistic version
  of~\cite{emmesEG12,oppelt08}},
  while \aprove{} fails on the TRSs of (P21)+(P23) and \autonon{}
  succeeds on those of (P21) only. We guess that \autonon{} fails
  on (P23) because all the TRSs of this directory are not
  left-linear.
  
  The version of \nti{} that participated in the Termination
  Competition~2023 relies on a weaker variation of
  Definition~\ref{def:rec-pair} where $n_2 = 0$
  (actually, the idea of adding $n_2$ to this definition
  came after the competition). Since then, we have extended
  the code to handle $n_2$ and now \nti{} is able to 
  find recurrent pairs for 5~TRSs in the directory
  \texttt{TRS\_Standard/Waldmann\_06} of~\cite{tpdb}
  (\verb+jwno2.xml+, \verb+jwno3.xml+, \verb+jwno5.xml+,
  \verb+jwno7.xml+, \verb+jwno8.xml+) and also for
  the TRS \verb+TRS_Standard/Mixed_TRS/6.xml+. Therefore,
  together with the 3~TRSs in \texttt{TRS\_Standard/Zantema\_15}
  (see Example~\ref{ex:binary-chain-tpdb}), \nti{} is able to
  find recurrent pairs for 9~TRSs that already occurred
  in~\cite{tpdb} before we added (P21) and (P23). All these
  9~TRSs are also proved non-terminating by \autonon{}
  but not by \aprove.
\end{itemize}

\section{Related Work}
\label{sect:rel-work}
In~\cite{payetM06,payet08} we have presented an extended
description of related work in term rewriting and in
logic programming. The state of the art has not evolved
much since then. To the best of our knowledge, the only
significant new results have been introduced
in~\cite{wangS06,oppelt08,endrullisZ15,emmesEG12} for
string and term rewriting.
\begin{itemize}
  \item In~\cite{oppelt08}, the author presents an approach
  to detect infinite chains in string rewriting.
  It uses rules between string patterns of the form $u v^n w$
  where $u$, $v$, $w$ are strings and $n$ can be instantiated
  by any natural number. This idea is extended in~\cite{emmesEG12}
  to term rewriting. String patterns are replaced by
  \emph{pattern terms} that describe sets of terms and have
  the form $t\sigma^n\mu$ where $t$ is a term, $\sigma,\mu$ are
  substitutions and $n$ is any natural number.
  A \emph{pattern rule} $(t_1\sigma_1^n\mu_1, t_2\sigma_2^n\mu_2)$
  is correct if, for all $n\in\nat$,
  $t_1\sigma_1^n\mu_1 \ra^+_P t_2\sigma_2^n\mu_2$ holds,
  where $P$ is the TRS under consideration. Several
  inference rules are introduced to derive correct pattern rules
  from a TRS automatically. A sufficient condition on the derived
  pattern rules is also provided to detect non-termination.
  Currently, we do not have a clear idea of the links between
  the infinite chains considered in these papers and those of
  Section~\ref{sect:binary-chains}. The experimental results
  reported in Section~\ref{sect:implementation} suggest that
  these papers address a form of non-termination which is
  not connected to recurrent pairs.
  Indeed, \aprove{} fails on the TRSs of (P21)+(P23) while
  \nti{} fails to find recurrent pairs for those of (E12).
  \item In~\cite{endrullisZ15}, an automatable approach is
  presented to prove the existence of infinite chains for
  TRSs. The idea is to find a non-empty regular language of
  terms that is closed under rewriting and does not contain
  normal forms.
  It is automated by representing the language by a finite
  tree automaton and expressing these requirements in a SAT
  formula whose satisfiability implies non-termination. 
  A major difference \wrt{} our work is that this technique
  only addresses left-linear TRSs; in contrast, the approach
  of Section~\ref{sect:rec-pairs} applies to any ARS which
  is closed under substitution, and left-linearity is not
  needed. On the other hand, using the notations of 
  Section~\ref{sect:rec-pairs}, the set
  $\{c_1[m,n] \mid m,n \in \nat,\ n \geq n_2\}$
  is a non-empty regular language of terms which is closed
  under $\rra^+_{\Pi}$ and does not contain normal forms
  (see Proposition~\ref{prop:w1-w2}), \ie it is precisely
  the kind of set that the approach of~\cite{endrullisZ15}
  tries to find.
  \item The concept of \emph{inner-looping chain} is
  presented and studied in~\cite{wangS06}. It corresponds
  to infinite chains that have the form
  \[c_1[c_2^{n_0}s\theta^{n_0}] \ra^+_P
  c_1[c_2^{n_1}s\theta^{n_1}] \ra^+_P \cdots\]
  where $P$ is a program, $c_1,c_2$ are contexts, $s$
  is a term, $\theta$ is a substitution
  and the $n_i$'s are non-negative integers (here,
  $c_2^n s \theta^n = c_2[\cdots c_2[c_2[s\theta]\theta]\theta\cdots]$
  where $c_2$ and $\theta$ repeat $n$ times).
  For instance, the infinite chain of 
  Example~\ref{ex:inner-loop} is inner-looping
  because it has the form $a_0 \ra^+_P a_1 \ra^+_P \cdots$
  where, for all $n \in \nat$,
  $a_n = \symb{ab}^n\symb{dc} = c_1[c_2^n s \theta^n]$ 
  with $c_1 = \asym(\square)$, $c_2 = \bsym(\square)$,
  $s = \symb{dc}$ and $\theta = \emptyset$.
  We do not know how inner-looping chains are
  connected to the infinite chains considered in
  Section~\ref{sect:binary-chains}.
\end{itemize}

The notion of \emph{recurrent set}~\cite{guptaHMRX08}
used to prove non-termination of imperative programs is
also related to our work. It can be defined as follows
in our formalism.
\begin{definition}\label{def:rec-set}
  Let $\cA = \ars$ be an ARS.
  A set $B$ is \emph{recurrent for $\cA$} if
  $B \neq \emptyset$, $B \subseteq A$ and
  $(\rra^+_{\Pi}(b) \mathop{\cap} B) \neq \emptyset$
  for all $b\in B$. 
\end{definition}

For instance, in Section~\ref{sect:rec-pairs}, the set
$\{c_1[m,n] \mid m,n \in \nat,\ n \geq n_2\}$ is recurrent
(see Proposition~\ref{prop:w1-w2}). Moreover, the regular
languages of terms computed by the approach
of~\cite{endrullisZ15} are non-empty and closed under
rewriting, hence they are also recurrent.

\section{Conclusion}\label{sect:conclusion}
%
We have considered two forms of non-termination, namely,
loops and binary chains, in an abstract framework that
encompasses term rewriting and logic programming.
We have presented a syntactic criterion to detect
a special case of binary chains and implemented it
successfully in our tool \nti.
As for future work, we plan to investigate the
connections between the infinite chains considered
in~\cite{emmesEG12,wangS06} and in this paper.

\section*{Acknowledgments}
The author is very grateful to the anonymous
reviewers for their insightful and constructive
comments and suggestions on this work.

\bibliographystyle{plain}

\end{document}